\documentclass[pdflatex,sn-mathphys-num]{sn-jnl}% Math and Physical Sciences Numbered Reference Style
\usepackage{graphicx,subcaption,epstopdf}%
\usepackage{multirow}%
\usepackage{amsmath,amssymb,amsfonts}%
\usepackage{amsthm}%
\usepackage{mathrsfs}%
\usepackage[title]{appendix}%
\usepackage{xcolor}%
\usepackage{textcomp}%
\usepackage{manyfoot}%
\usepackage{booktabs}%
\usepackage{algorithm}%
\usepackage{algorithmicx}%
\usepackage{algpseudocode}%
\usepackage{listings}%
\usepackage{bm}%
\usepackage{booktabs,array}
\theoremstyle{thmstyleone}%
\newtheorem{theorem}{Condition}%  meant for continuous numbers
\newtheorem{proposition}{Proposition}% to get separate numbers for theorem and proposition etc.

\theoremstyle{thmstyletwo}%

\theoremstyle{thmstylethree}%

\begin{document}

\title[Shifted asymmetric Laplace mixtures of experts]{Shifted asymmetric Laplace mixtures of experts}

%%=============================================================%%
%% GivenName	-> \fnm{Joergen W.}
%% Particle	-> \spfx{van der} -> surname prefix
%% FamilyName	-> \sur{Ploeg}
%% Suffix	-> \sfx{IV}
%% \author*[1,2]{\fnm{Joergen W.} \spfx{van der} \sur{Ploeg} 
%%  \sfx{IV}}\email{iauthor@gmail.com}
%%=============================================================%%
\author*[1]{\fnm{Sphiwe B.} \sur{Skhosana}}

\author[2,3]{\fnm{Hien Duy} \sur{Nguyen}}

\affil[1]{\orgdiv{Department of Statistics}, \orgname{University of Pretoria}, \orgaddress{\city{Pretoria}, \postcode{0002}, \country{South Africa}}}
\affil[2]{\orgdiv{School of Computing, Engineering and Mathematical Sciences}, \orgname{La Trobe University}, \orgaddress{\city{Bundoora}, \postcode{VIC 3086}, \country{Australia}}}
\affil[3]{\orgdiv{Institute of Mathematics for Industry}, \orgname{Kyushu University}, \orgaddress{\city{Nishi Ward}, \postcode{Fukuoka 819-0395}, \country{Japan}}}
%%==================================%%
%% Sample for unstructured abstract %%
%%==================================%%

\abstract{Mixtures of experts (MoE) models provide a flexible framework for modelling heterogeneity in data for regression and model-based clustering and classification. MoE models for regression are typically based on the Gaussian assumption for the expert distributions. To robustify the MoE framework with respect to data exhibiting skewness, heavy tails and outliers, we propose a robust non-normal MoE model using the shifted asymmetric Laplace (SAL) distribution. The proposed SALMoE model overcomes the limitations of the Gaussian MoE model when the observed data are asymmetric and heavy-tailed. Through a combination of the minorization-maximization (MM) algorithm with the classical Expectation-Maximization (EM), we develop a dedicated hybrid EM-MM algorithm to estimate the parameters of the SALMoE model. The EM-MM algorithm is shown to yield a nondecreasing observed log-likelihood. A simulation study demonstrates the robustness and practical utility of the proposed model. Finally, the SALMoE model is applied to two real-world economic datasets. The \texttt{R} code used to fit the SALMoE model using the EM-MM algorithm is available through the link: \href{https://github.com/Sphiwe-Skhosana/SALMoE}{https://github.com/SALMoE\_EM-MM}.}

\keywords{Mixture of experts, shifted asymmetric Laplace distribution, Expectation-Maximization, Minorization-Maximization}

%%\pacs[JEL Classification]{D8, H51}

%%\pacs[MSC Classification]{35A01, 65L10, 65L12, 65L20, 65L70}

\maketitle

\section{Introduction}\label{sec1}
Finite mixtures of regression (FMR) models, first introduced by \citet{quandt1972}, are based on the assumption that the data, which consist of the response variable $y\in \mathbb{R}$ and a covariate vector $\mathbf{x}=(1,x_1,\dots,x_p)\in\mathbb{R}^{p+1}$ including an intercept, are generated from, say $K$, unknown regression models. Moreover, the data are governed by a component indicator variable $Z$ indicating which regression model generated which data point. The variable $Z$ has distribution $\mathbb{P}(Z=k)=\pi_k$, for $k=1,2,\dots,K$, where the $\pi_k$'s are the non-negative mixing probabilities satisfying $\sum_{k=1}^K\pi_k=1$. The variable $Z$ is typically unobserved or latent. Thus, given the set of covariates $\mathbf{x}$, the conditional density of the response variable $y$ is defined as 
\begin{eqnarray}\label{model1}
f(y|\mathbf{x};\boldsymbol{\vartheta})=\sum_{k=1}^K\pi_kf_k(y|\mathbf{x};\boldsymbol{\theta}_k)
\end{eqnarray}
where $\boldsymbol{\vartheta}=(\pi_1,\pi_2,\dots,\pi_K;\boldsymbol{\theta}_1,\boldsymbol{\theta}_2,\dots,\boldsymbol{\theta}_K)$ is the vector of model parameters, with $\boldsymbol\theta_k$ being the parameter vector of the $k^{th}$ component conditional distribution.\\
Although flexible compared to a single regression model, model \eqref{model1} is still limited by the assumption that the mixing probabilities are constants independent of any covariate information. Mixtures of Experts (MoE) models, first introduced by \citet{jacobs1991}, increase the flexibility of FMRs by allowing the mixing probabilities to be functions of a set of $q$ covariates, $\mathbf{t}=(1,t_1,\dots,t_q)\in\mathbb{R}^{q+1}$, such that
\begin{eqnarray}\label{model2}
f(y|\mathbf{x},\mathbf{t};\boldsymbol{\vartheta})=\sum_{k=1}^K\pi_k(\mathbf{t}|\boldsymbol\eta)f_k(y|\mathbf{x};\boldsymbol{\theta}_k)
\end{eqnarray}
In the MoE model \eqref{model2}, the mixing probabilities $\pi_k(\mathbf{t}|\boldsymbol\eta)$, known as gating functions, are typically modelled using the multinomial logistic model as
\begin{eqnarray}\label{logistic}
\pi_k(\mathbf{t}|\boldsymbol\eta)=\mathbb{P}(Z=k|\mathbf{t})=\frac{\exp(\mathbf{t}^\top\boldsymbol{\eta}_k)}{\sum_{\ell=1}^K\exp(\mathbf{t}^\top\boldsymbol{\eta}_\ell)}
\end{eqnarray}
where $\boldsymbol{\eta}=(\boldsymbol{\eta}_1,\boldsymbol{\eta}_2,\dots,\boldsymbol{\eta}_K)$, with $\boldsymbol{\eta}_k=(\eta_{k0},\eta_{k1},\dots,\eta_{kq})$, is a vector of the parameters for the gating functions and $\boldsymbol{\eta}_K=\mathbf{0}$ for identifiability. The vector of the model parameters is $\boldsymbol{\vartheta}=(\boldsymbol{\eta};\boldsymbol{\theta}_1,\boldsymbol{\theta}_2,\dots,\boldsymbol{\theta}_K)$.\\
In the MoE model \eqref{model2}, the conditional densities $f_k(y|\mathbf{x};\boldsymbol{\theta}_k)$ are known as experts because different component regression models can model data in different regions of the input space. Thus, each component regression model is an expert at making predictions in its own region and the gating function assigns each component to the region where it has the most predictive power.\\
Due to their flexibility, MoE models have been used in a wide variety of applications including, among others, the analysis of rank data \cite{gormley2008}, network data \cite{gormley2010}, time-series data \cite{zeevi1996, carvalho2005,fruhwirth2012,niu2024}, heavy-tailed data \cite{nguyen2016,chamroukhi2016,mirfarah2024}, censored data \cite{mirfarah2021}, asymmetric data \cite{chamroukhi2017} and functional data \cite{tamo2024,chamroukhi2024}. Detailed comprehensive reviews of the theory and application of MoE models are given in \cite{yuksel2012}, \cite{masoudnia2014} and recently \cite{nguyen2018}.\\
For simplicity and mathematical tractability, MoE models are typically based on Gaussian experts. That is, each conditional density function $f_k(y|\mathbf{x};\boldsymbol{\theta}_k)$ is assumed to be Gaussian. However, in practice, data exhibit asymmetric behaviour, heavy tails and are often contaminated with outliers or atypical observations. In such cases, the use of the Gaussian distribution is inappropriate and may have a negative impact on the fit of the MoE model. This is because the Gaussian distribution is symmetric and sensitive to outliers.\\
The inadequacy of the Gaussian distribution for MoE modelling when the data are asymmetric, heavy-tailed and/or contaminated by outliers has been previously considered by many authors. For symmetric heavy-tailed data in the presence of outliers, \citet{nguyen2016} proposed Laplace MoE (LMoE) models in which each conditional distribution is assumed to follow a Laplace distribution; \citet{chamroukhi2016} proposed TMoE models in which each conditional distribution is assumed to follow a Student-t distribution; \citet{mirfarah2021} proposed scale mixture of Gaussian MoE (SMGMoE) models in which each conditional distribution is assumed to follow a scale mixture of Gaussian distributions. The SMGMoE model includes the LMoE and TMoE as special cases. Another useful special case of the SMGMoE model is the contaminated Gaussian MoE model in which each conditional distribution is assumed to follow a contaminated Gaussian distribution. See \cite{mambo2026} for a formal definition of the model and its non-parametric extension. The case of fitting MoE models for asymmetric heavy-tailed data with outliers has received limited attention. To the best of our knowledge, the only paper that investigates this case is \citet{chamroukhi2017}, who proposed skew-t MoE (STMoE) models in which each conditional distribution follows the skew-t distribution. However, the STMoE model is less tractable: its Expectation-step in the EM algorithm is intractable and estimation is computationally expensive.\\
In this paper, we propose shifted asymmetric Laplace MoE (SALMoE) models in which each conditional distribution is assumed to follow the shifted asymmetric Laplace (SAL) distribution (see Section \ref{subsubsec2_2} for more details). The SALMoE model provides a mathematically tractable and computationally straightforward MoE model. \citet{franczak2014} introduced the finite mixture of SAL distributions for model-based clustering and classification of non-Gaussian (asymmetric and heavy-tailed with outliers) data. The authors demonstrated that this approach performs favourably compared to the finite mixture of Gaussian distributions. \citet{sun2019} proposed to use the finite mixture of SAL distributions for image segmentation in order to achieve robustness against noise and outliers in the images. \citet{morris2019} proposed finite mixtures of contaminated SAL distributions which extend the model proposed in \cite{franczak2014} to allow for the automatic detection of outliers and consequently the quantification of the proportion of outliers present in the data. Note, however, that the models proposed in \cite{franczak2014} and \citet{morris2019} assume that both the mixing probabilities and component distributions are independent of any covariate information. Recently, \citet{otto2025} proposed finite mixtures of linear SAL regression models which extend the model proposed by \citet{morris2019} to the regression context. However, the authors assumed that the mixing proportions are independent of any covariate information. The proposed SALMoE model considers the case in which both the mixing proportions and the component distributions are functions of a set of covariates.
Compared to the skew-t expert distributions used in \cite{chamroukhi2017}, the SAL expert distributions have the advantage of being mathematically tractable and computationally straightforward as shown in \cite{franczak2014}.\\
Inference for MoE models can be conducted under the frequentist framework using maximum likelihood estimation (MLE). \citet{jordan1994} derived the MLEs of the MoE model using the Expectation-Maximization (EM) algorithm \cite{dempster1977}. \citet{ng2004} proposed to use the Expectation-Conditional-Maximization (ECM) algorithm \cite{meng1993} to obtain the parameter estimates of the MoE model. \citet{gormley2008} used a hybrid between the EM algorithm and the Minorization-Maximization (MM) algorithm \cite{lange2000} to derive the MLEs. \citet{nguyen2016} used an MM algorithm to derive the MLEs for their proposed LMoE model.\\
Inference for MoE models under the Bayesian framework is performed using Markov chain Monte Carlo (MCMC) methods (see \cite{peng1996}) and variational inference methods (see \cite{Bishop2003}).\\
In this paper, we make use of frequentist methods. In particular, we will make use of a hybrid EM-MM algorithm to derive the parameter estimates of the proposed SALMoE model. The EM-MM algorithm replaces the computationally intensive iterative reweighted least squares (IRLS) step in the traditional EM algorithm for estimating the gating parameters $\boldsymbol\eta$ with a one-step MM update similar to \cite{gormley2008}. We show that the one-step MM update ensures a nondecreasing observed log-likelihood at each iteration of the EM-MM algorithm. \\
The rest of the paper is structured as follows. In Section \ref{subsec2}, we define the proposed SALMoE model. In Section \ref{subsec3}, we derive the MLEs via the EM-MM algorithm. In Section \ref{subsec4}, we discuss model selection for the proposed SALMoE model. In Sections \ref{subsec5} and \ref{subsec6}, we show how the model can be used for prediction and model-based clustering. In Section \ref{sec3}, we demonstrate the effectiveness of the proposed methods through an extensive simulation study. In Section \ref{sec4}, we demonstrate the practical usefulness of the proposed methods on real data and then we conclude the paper and provide directions for future studies in Section \ref{sec5}.
\section{Methodology}\label{sec2}

\subsection{Notation and standing conventions}
All random variables are taken to be defined on a common probability space $(\Omega,\mathcal{F},\mathbb{P})$. Densities are with respect to Lebesgue measure, except for the latent component indicators, which are discrete. We write $\mathbb{P}_{\boldsymbol\vartheta}$, $\mathbb{E}_{\boldsymbol\vartheta}$ and $\mathbb{V}_{\boldsymbol\vartheta}$ when the dependence on the model parameter $\boldsymbol\vartheta$ is being emphasised, and we write $\mathbb{R}^{+}=(0,\infty)$. The covariate vectors $\mathbf{x}=(1,x_1,\dots,x_p)\in\mathbb{R}^{p+1}$ and $\mathbf{t}=(1,t_1,\dots,t_q)\in\mathbb{R}^{q+1}$ include intercept terms, and the corresponding supports are denoted by $\mathcal{X}$ and $\mathcal{T}$. For matrix notation, $\mathbf{A}^{\top}$ denotes transpose, $\mathbf{I}_{r}$ is the $r\times r$ identity matrix, $\mathbf{1}$ is a vector of ones of the required length, $\operatorname{diag}(\cdot)$ denotes a diagonal matrix, and $\otimes$ denotes the Kronecker product. The symbols $\preceq$ and $\succeq$ denote the Loewner order. The full parameter vector of the SALMoE model is $\boldsymbol\vartheta=(\boldsymbol\eta;\boldsymbol\theta_1,\dots,\boldsymbol\theta_K)$, where $\boldsymbol\theta_k=(\alpha_k,\sigma_k,\boldsymbol\beta_k)$, $\sigma_k>0$, and the multinomial logit gating parameters satisfy the baseline constraint $\boldsymbol\eta_K=\mathbf{0}$. Thus the natural parameter space is a subset of
\[
\Theta=\Theta_{\boldsymbol\eta}\times\prod_{k=1}^{K}
\left(\mathbb{R}\times\mathbb{R}^{+}\times\mathbb{R}^{p+1}\right).
\]
When compactness is needed for the convergence and information criteria statements, we work on the compact subset
\[
\begin{aligned}
\Theta_c=\{\boldsymbol\vartheta\in\Theta:\;&\boldsymbol\eta_K=\mathbf{0},\;\sigma_{\min}\le\sigma_k\le\sigma_{\max},\;|\alpha_k|\le A,\;\|\boldsymbol\beta_k\|\le B,\;\|\boldsymbol\eta_k\|\le E,\;\\
&k=1,\dots,K\}.
\end{aligned}
\]
for fixed constants $0<\sigma_{\min}<\sigma_{\max}<\infty$ and $A,B,E<\infty$, where $\|\cdot\|$ denotes the Euclidean norm. When identifiability is invoked, the ordering and irreducibility restrictions are imposed separately.

\subsection{The shifted asymmetric Laplace mixture of experts}\label{subsec2}
In this section, we define the proposed SALMoE model. The proposed model is based on the stochastic representation of the SAL distribution. The latter relies on the generalised inverse Gaussian (GIG) distribution. We first recall the GIG distribution followed by the SAL distribution and its stochastic representation. Finally, we describe the hierarchical representation of the SAL distribution and use it to derive the proposed SALMoE model. 
\subsubsection{The generalised inverse Gaussian distribution}\label{subsubsec2_1}
A random variable $W\in \mathbb{R}^{+}$ is said to follow a GIG distribution if it has density
\begin{equation}\label{gig_density}
    h(w)=\frac{(a/b)^{\nu/2}}{2K_{\nu}(\sqrt{ab})}w^{\nu-1}\exp\left\{-\frac{1}{2}\left(bw^{-1}+aw\right)\right\}
\end{equation}
where $a,b\in\mathbb{R}^{+}$ and $\nu\in\mathbb{R}$ are the parameters. Here $K_\nu(\cdot)$ is the modified Bessel function of the third kind with index $\nu$. One useful representation of $K_\nu(z)$, for $z>0$, is given by
\[
K_\nu(z)=\frac{1}{2}\int_{0}^{\infty}s^{\nu-1}\exp\left\{-\frac{z}{2}\left(s+s^{-1}\right)\right\}ds.
\]

\subsubsection{The shifted asymmetric Laplace distribution}\label{subsubsec2_2} 
The SAL distribution has demonstrated excellent performance in modelling asymmetric data \cite{franczak2014}. Moreover, due to its relationship to the GIG distribution, the SAL distribution is mathematically tractable and parametric estimation is computationally straightforward. The probability density function of a random variable $y\in\mathbb{R}$ that follows the SAL distribution is
\begin{equation}\label{sal_density}
    g(y|\alpha,\sigma,\mu)=\frac{2\exp\{\alpha(y-\mu)/\sigma\}}{\sqrt{2\pi\sigma}}\left[\frac{(y-\mu)^2/\sigma}{2+\alpha^2/\sigma}\right]^{\nu/2}K_\nu(u)
\end{equation}
where $\alpha\in \mathbb{R}$ is the skewness parameter, $\sigma\in\mathbb{R}^{+}$ is the scale parameter, $\mu\in\mathbb{R}$ is the location (shift) parameter, $\nu=1/2$ and
\begin{equation}\nonumber
u=\sqrt{(2+\alpha^2/\sigma)(y-\mu)^2/\sigma}
\end{equation}
According to \citet{kotz2001}, the SAL distribution can be represented as follows. Let $Z\sim\mathcal{N}(0,\sigma)$, and let $V\in \mathbb{R}^{+}$ be a random variable, independent of $Z$, that follows an exponential distribution with mean $1$, that is $V\sim \text{Exp}(1)$, then 
\begin{equation}\label{stochastic_rep}
    Y=\mu+\alpha V+Z\sqrt{V}
\end{equation}
follows the SAL distribution with density function \eqref{sal_density}. By the conditional-density identity, the conditional distribution of $V$ given $Y=y$ is 
\begin{equation}\label{bayes1}
    f(v|Y=y)=\frac{f_{Y|V}(y|v)f_V(v)}{f_Y(y)}
\end{equation}
where $Y|V=v\sim \mathcal{N}\{\mu+\alpha v,v\sigma\}$ and $V\sim \text{Exp}(1)$. As given in \cite{franczak2014} for the multivariate case, the expression for \eqref{bayes1} in the univariate case is as follows
\begin{equation}\label{bayes2}
f(v|Y=y)=\frac{v^{\nu-1}}{2}\left[\frac{(y-\mu)^2/\sigma}{2+\alpha^2/\sigma}\right]^{-\nu/2}\frac{\exp\{-\frac{1}{2v}(y-\mu)^2/\sigma-\frac{v}{2}(2+\alpha^2/\sigma)\}}{K_{\nu}(u)}
\end{equation}
With $a=2+\alpha^2/\sigma$ and $b=(y-\mu)^2/\sigma$ in \eqref{bayes2}, the resulting density function can then be seen to be a GIG density function in \eqref{gig_density}. This result will be very useful in the implementation of the estimation procedure for the proposed SALMoE model. 
\subsubsection{The proposed model}\label{subsubsec2_3} 
The SALMoE model is a regression mixture-of-experts extension of finite mixtures of SAL distributions first introduced by \citet{franczak2014} for model-based classification and clustering of multivariate non-Gaussian data. For univariate non-Gaussian data, the probability density function of a $K$ component mixture of SAL distributions is
\begin{equation}
f(y|\boldsymbol{\vartheta})=\sum_{k=1}^K\pi_kg(y|\alpha_k,\sigma_k,\mu_k)
\end{equation}
where $\boldsymbol{\vartheta}=((\pi_1,\boldsymbol{\theta}_1),(\pi_2,\boldsymbol{\theta}_2),\dots,(\pi_K,\boldsymbol{\theta}_K))$, with $\boldsymbol{\theta}_k=(\alpha_k,\sigma_k,\mu_k)$, is a vector of all the model parameters and the $k^{th}$ component density is given by \eqref{sal_density}.\\
\citet{morris2019} proposed mixtures of contaminated SAL distributions for model-based clustering of asymmetric data in the presence of outliers. However, the authors assumed that the mixing proportions and the component distributions are independent of any covariate information. \citet{otto2025} extended the model proposed by \cite{morris2019} to the regression case by assuming that each component distribution is a function of a set of random covariates. However, the model assumes that the mixing proportions are constant and hence not influenced by any covariate information.\\
The proposed SALMoE is an MoE model with SAL distributed experts and gating functions obtained by modelling the mixing probabilities as functions of the covariates using the multinomial logistic model in \eqref{logistic}. For the covariates $\mathbf{x}$ and $\mathbf{t}$ defined in the standing conventions, the conditional density function of $y$ is 
\begin{equation}\label{SALMoE}
f(y|\mathbf{x},\mathbf{t};\boldsymbol{\vartheta})=\sum_{k=1}^{K}\pi_k(\mathbf{t}|\boldsymbol\eta)g(y|\alpha_k,\sigma_k,\mu_{k}(\mathbf{x};\boldsymbol{\beta}_k))
\end{equation}
having parameter vector $\boldsymbol{\vartheta}=(\boldsymbol\eta;\boldsymbol\theta_1,\boldsymbol\theta_2,\dots,\boldsymbol\theta_K)$, where $\boldsymbol\theta_k=(\alpha_k,\sigma_k,\boldsymbol\beta_k)$ is the vector of model parameters for the $k^{th}$ expert and $\mu_k(\mathbf{x};\boldsymbol\beta_k)=\mathbf{x}^\top\boldsymbol\beta_k$, with $\boldsymbol{\beta}_k=(\beta_{k0},\beta_{k1},\dots,\beta_{kp})$ the vector of the regression parameters for the $k^{th}$ expert component.

\subsubsection{Identifiability of the SALMoE model}
The identifiability of an MoE model or mixture model, in general, is important in order to conduct meaningful and accurate inference. \citet{titterington1985} and \citet{hennig2000} established the identifiability (up to label switching) of mixture models and mixtures of regressions, respectively.\\
\citet{jiang1999} established the identifiability of ordered, initialized and irreducible MoE models. In the SALMoE model, initialized means that the gating functions have the multinomial logistic form in \eqref{logistic} with the baseline constraint $\boldsymbol\eta_K=\mathbf{0}$; ordered means that an ordering relation is imposed on the expert parameters $\boldsymbol\theta_k$ such that $\boldsymbol\theta_1\prec\boldsymbol\theta_2\prec\dots\prec\boldsymbol\theta_K$; and irreducible means that if $k\neq j$, then at least one of $\alpha_k\neq\alpha_j$, $\sigma_k\neq\sigma_j$ or $\boldsymbol\beta_k\neq\boldsymbol\beta_j$ holds. The following non-degeneracy condition is the SAL analogue of the condition used in Lemma 2 of \citet{jiang1999}.
\begin{theorem}[Non-degeneracy condition]\label{cond:nondegeneracy}
For any $2K$ distinct triplets $(\alpha_r,\sigma_r,\boldsymbol\beta_r)$, $r=1,2,\dots,2K$, with $\sigma_r>0$, the functions
\[
(y,\mathbf{x})\mapsto g(y|\alpha_r,\sigma_r,\mu_r(\mathbf{x};\boldsymbol\beta_r))
\]
are linearly independent on $\mathbb{R}\times\mathcal{X}$.
\end{theorem}
\begin{proposition}\label{prop:identifiability}
For a fixed number of components $K$, any irreducible, ordered and initialized SALMoE model satisfying Condition \ref{cond:nondegeneracy} is identifiable up to label switching. Under the imposed ordering, the model is identifiable.
\end{proposition}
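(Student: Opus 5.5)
The plan is to follow the argument of Lemma 2 and Theorem 1 of \citet{jiang1999}, with Condition \ref{cond:nondegeneracy} playing the role of their non-degeneracy assumption. Suppose two irreducible, ordered and initialized SALMoE parameters $\boldsymbol\vartheta=(\boldsymbol\eta;\boldsymbol\theta_1,\dots,\boldsymbol\theta_K)$ and $\boldsymbol\vartheta'=(\boldsymbol\eta';\boldsymbol\theta_1',\dots,\boldsymbol\theta_K')$ give the same conditional density \eqref{SALMoE} for almost every $(y,\mathbf{x},\mathbf{t})$. For each fixed $\mathbf{t}\in\mathcal{T}$ this reads
\[
\sum_{k=1}^K\pi_k(\mathbf{t}|\boldsymbol\eta)\,g(y|\alpha_k,\sigma_k,\mathbf{x}^\top\boldsymbol\beta_k)-\sum_{k=1}^K\pi_k(\mathbf{t}|\boldsymbol\eta')\,g(y|\alpha_k',\sigma_k',\mathbf{x}^\top\boldsymbol\beta_k')=0 \quad\text{on }\mathbb{R}\times\mathcal{X}.
\]
This is a linear relation among at most $2K$ expert functions.

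First, merge any triplets that coincide between the two lists. If the collection $\{\boldsymbol\theta_k\}\cup\{\boldsymbol\theta_k'\}$ has fewer than $2K$ distinct elements, pad it with arbitrary further distinct triplets carrying zero coefficients. Condition \ref{cond:nondegeneracy} then applies and forces every combined coefficient to vanish. By irreducibility, the $\boldsymbol\theta_k$ are pairwise distinct within the first list, and likewise within the second. Every gating probability $\pi_k(\mathbf{t}|\boldsymbol\eta)$ from \eqref{logistic} is strictly positive. So a triplet $\boldsymbol\theta_k$ with no match in the primed list would keep the coefficient $\pi_k(\mathbf{t}|\boldsymbol\eta)>0$, a contradiction. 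Hence each $\boldsymbol\theta_k$ equals some $\boldsymbol\theta_{\tau(k)}'$. Since both lists have $K$ distinct elements, $\tau$ is a permutation, and the vanishing coefficients give $\pi_k(\mathbf{t}|\boldsymbol\eta)=\pi_{\tau(k)}(\mathbf{t}|\boldsymbol\eta')$ for all $\mathbf{t}$. This is identifiability up to label switching. Under the ordering $\boldsymbol\theta_1\prec\dots\prec\boldsymbol\theta_K$, imposed on both parameter vectors, $\tau$ must be the identity, so $\boldsymbol\theta_k=\boldsymbol\theta_k'$ for all $k$.

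Next, recover the gating parameters. Take ratios with the baseline component and use the initialization $\boldsymbol\eta_K=\boldsymbol\eta_K'=\mathbf{0}$. Then $\exp(\mathbf{t}^\top\boldsymbol\eta_k)=\exp(\mathbf{t}^\top\boldsymbol\eta_k')$, so $\mathbf{t}^\top(\boldsymbol\eta_k-\boldsymbol\eta_k')=0$ for all $\mathbf{t}\in\mathcal{T}$. This yields $\boldsymbol\eta_k=\boldsymbol\eta_k'$ provided $\mathcal{T}$ is not contained in a proper affine hyperplane, i.e.\ the vectors $(1,t_1,\dots,t_q)$ span $\mathbb{R}^{q+1}$. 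I will state this mild support assumption explicitly, as is implicit in \citet{jiang1999}. Without the ordering, the same argument gives $\boldsymbol\eta$ up to the relabelling $\tau$, after re-normalizing the baseline.

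The main obstacle is the bookkeeping in the second step: checking that Condition \ref{cond:nondegeneracy}, stated for exactly $2K$ distinct triplets, legitimately covers the case of overlapping lists. Padding with zero-coefficient triplets handles this, since linear independence of a larger family implies it for any subfamily. A second delicate point is that equality of densities holds only almost everywhere in $\mathbf{t}$. I will pass to a set of full measure and use continuity of $\pi_k$ in $\mathbf{t}$ to extend the identities to all of $\mathcal{T}$. Verifying Condition \ref{cond:nondegeneracy} itself, for example through the distinct tail behaviour $\exp\{(\alpha\mp\sqrt{2\sigma+\alpha^2})|y|/\sigma\}$ of SAL densities, is not required, since the proposition assumes it.
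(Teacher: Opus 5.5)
Your proof is correct. It follows the same skeleton as the paper's: linear independence of the experts, strict positivity of the logit gates, the baseline constraint $\boldsymbol\eta_K=\mathbf{0}$, and the ordering to remove the permutation. The difference is that the paper invokes Lemma~2 of \citet{jiang1999} as a black box, whereas you derive the matching directly from Condition~\ref{cond:nondegeneracy}.

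The direct route makes three things explicit that the paper's one-line proof leaves implicit. First, your padding step shows that a condition stated for exactly $2K$ distinct triplets also covers overlapping lists. Second, you check that Condition~\ref{cond:nondegeneracy} actually suffices. That condition is a joint linear-independence statement on $\mathbb{R}\times\mathcal{X}$, not a statement about the expert densities as functions of $y$ alone, which is the form Jiang and Tanner's non-degeneracy hypothesis takes. Third, you state the support assumption that the vectors $\mathbf{t}\in\mathcal{T}$ span $\mathbb{R}^{q+1}$. This assumption is genuinely necessary to recover $\boldsymbol\eta$, and the paper never states it in the identifiability result. The citation, for its part, buys brevity, and under Jiang and Tanner's own hypotheses it also covers gates and experts that share covariates.

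That last point deserves a caveat. Your step of fixing $\mathbf{t}$ and reading off a constant-coefficient relation on $\mathbb{R}\times\mathcal{X}$ requires $(\mathbf{x},\mathbf{t})$ to range over the product $\mathcal{X}\times\mathcal{T}$. The paper's proof is phrased in exactly that setting, so this is not a gap relative to the paper. It does, however, exclude the case $\mathbf{x}=\mathbf{t}$ used in the simulations and applications. There the coefficients $\pi_k(\mathbf{t}\mid\boldsymbol\eta)$ vary with $\mathbf{x}$, and you would need a pointwise-in-$\mathbf{x}$ linear-independence statement in $y$ instead.
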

\begin{proof}
Consider two $K$-component SALMoE representations that give the same conditional density for all $(y,\mathbf{x},\mathbf{t})\in\mathbb{R}\times\mathcal{X}\times\mathcal{T}$. The multinomial logistic gating functions with the baseline constraint $\boldsymbol\eta_K=\mathbf{0}$ satisfy the initialized gating condition in \citet{jiang1999}. Irreducibility excludes duplicate expert components, and Condition \ref{cond:nondegeneracy} gives the required non-degeneracy of the expert family. Lemma 2 of \citet{jiang1999} therefore implies that the two representations have the same gating and expert parameters after a common permutation of the component labels. The ordering constraint removes this remaining permutation ambiguity, giving identifiability under the imposed ordering.
\end{proof}
\subsection{Estimation procedure for the SALMoE model}\label{subsec3}
Consider a random sample $\{(\mathbf{x}_i,\mathbf{t}_i,y_i):i=1,2,\dots,n\}$ generated from the model \eqref{SALMoE}. The corresponding observed-data log-likelihood is
\begin{equation}\label{loglik}
\ell(\boldsymbol\vartheta)=\sum_{i=1}^{n}\log\left\{\sum_{k=1}^{K}\pi_k(\mathbf{t}_i|\boldsymbol\eta)g(y_i|\alpha_k,\sigma_k,\mu_{k}(\mathbf{x}_i;\boldsymbol{\beta}_k))\right\}.
\end{equation}
The estimation of the parameter $\boldsymbol\vartheta$ is performed by making use of a hybrid algorithm that combines the Minorization-Maximization (MM) algorithm \cite{lange2000} with the classical Expectation-Maximization (EM) algorithm \cite{dempster1977}, which we call the EM-MM algorithm. Towards that end, for each data point $(\mathbf{x}_i,\mathbf{t}_i,y_i)$, we introduce the unobserved component indicator variable $\mathbf{z}_i=(z_{i1},z_{i2},\dots,z_{iK})$, where $z_{ik}=1$ if the data point $(\mathbf{x}_i,\mathbf{t}_i,y_i)$ belongs to the $k^{th}$ component or $0$ otherwise and the latent variable $v_i$ which results in the complete-data $\{(\mathbf{x}_i,\mathbf{t}_i,\mathbf{z}_i,v_i,y_i):i=1,2,\dots,n\}$.\\
Now, we can obtain the hierarchical representation for the SALMoE model using the stochastic representation \eqref{stochastic_rep} as
\begin{eqnarray}
    Y_i|\mathbf{x}_i,v_{i},z_{ik}=1&\sim& \mathcal{N}\{\mathbf{x}_i^\top\boldsymbol\beta_k+v_{i}\alpha_k,v_{i}\sigma_k\}\nonumber\\
    v_{i}|z_{ik}=1&\sim& \text{Exp}(1)\nonumber\\
    \mathbf{z}_i|\mathbf{t}_i&\sim& \text{Mult}(1;\pi_1(\mathbf{t}_i|\boldsymbol\eta),\dots,\pi_K(\mathbf{t}_i|\boldsymbol\eta))\nonumber
\end{eqnarray}
where $\text{Mult}(1;\phi_1,\dots,\phi_m)$ denotes the multinomial distribution with a single trial, $m$ categories each having success probabilities $\phi_1,\phi_2,\dots,\phi_m$.\\
Based on the above, the corresponding complete-data log-likelihood is 
\begin{equation}
    \ell_c(\boldsymbol\vartheta)=\ell_{c}(\boldsymbol\eta)+\ell_{c}(\boldsymbol\theta),
\end{equation}
where
\begin{align}
\ell_{c}(\boldsymbol\eta)&=\sum_{i=1}^n\sum_{k=1}^K z_{ik}\log\{\pi_k(\mathbf{t}_i|\boldsymbol\eta)\},\\
\ell_{c}(\boldsymbol\theta)&=\sum_{i=1}^n\sum_{k=1}^K z_{ik}\log\{f(y_i|\mathbf{x}_i,v_i,z_{ik}=1)\times f(v_i|z_{ik}=1)\}\nonumber\\
&=\sum_{i=1}^n\sum_{k=1}^K z_{ik}\log\left[\mathcal{N}\{y_i|\mu_k(\mathbf{x}_i;\boldsymbol\beta_k)+v_i\alpha_k,v_i\sigma_k\}\times \exp(-v_i)\right]\nonumber\\
&=\sum_{i=1}^n\sum_{k=1}^K z_{ik}\left[-\frac{1}{2}\log(2\pi\sigma_kv_i)-\frac{\{y_i-\mu_k(\mathbf{x}_i;\boldsymbol\beta_k)-v_i\alpha_k\}^{2}}{2\sigma_kv_i}-v_i\right]\nonumber\\
&=\sum_{i=1}^n\sum_{k=1}^K z_{ik}\left[
\begin{aligned}
&-\frac{1}{2}\log(2\pi\sigma_kv_i)-\frac{\{y_i-\mu_k(\mathbf{x}_i;\boldsymbol\beta_k)\}^{2}}{2\sigma_kv_i}\\
&+\frac{\alpha_k\{y_i-\mu_k(\mathbf{x}_i;\boldsymbol\beta_k)\}}{\sigma_k}-\frac{v_i\alpha^2_k}{2\sigma_k}-v_i
\end{aligned}
\right].
\end{align}
The EM-MM algorithm begins with an appropriately chosen initial parameter vector $\boldsymbol\vartheta^{(0)}$ and then alternates between the Expectation (E-) step, Maximization (M-) step and MM-step until convergence. In the E-step of the EM algorithm, we calculate the conditional expected value of the complete-data log-likelihood $\ell_c(\boldsymbol\vartheta)$, denoted by $Q(\boldsymbol\vartheta|\boldsymbol\vartheta^{(old)})$, given the observed data $\{(\mathbf{x}_i,\mathbf{t}_i,y_i):i=1,2,\dots,n\}$ and parameter estimate $\boldsymbol\vartheta^{(old)}$. In the M-step, we maximize $Q(\boldsymbol\vartheta|\boldsymbol\vartheta^{(old)})$ with respect to $\boldsymbol\vartheta$. 
\subsubsection*{E-step}
In the E-step of the EM algorithm, we obtain the conditional expectation of $\ell_c(\boldsymbol\vartheta)$ as
\begin{equation}
Q(\boldsymbol\vartheta|\boldsymbol\vartheta^{(old)})=Q(\boldsymbol\eta|\boldsymbol\eta^{(old)})+Q(\boldsymbol\theta|\boldsymbol\theta^{(old)}),
\end{equation}
where
\begin{align}
Q(\boldsymbol\eta|\boldsymbol\eta^{(old)})&=\sum_{i=1}^n\sum_{k=1}^K\gamma^{(new)}_{ik}\log\{\pi_k(\mathbf{t}_i|\boldsymbol\eta)\},\nonumber\\
Q(\boldsymbol\theta|\boldsymbol\theta^{(old)})&=\sum_{i=1}^n\sum_{k=1}^K\gamma^{(new)}_{ik}\left[
\begin{aligned}
&-\frac{1}{2}\log(2\pi\sigma_k)-\frac{1}{2}v^{(new)}_{ik,1}
-\frac{v^{(new)}_{ik,2}\{y_i-\mu_k(\mathbf{x}_i;\boldsymbol\beta_k)\}^{2}}{2\sigma_k}\\
&+\frac{\alpha_k\{y_i-\mu_k(\mathbf{x}_i;\boldsymbol\beta_k)\}}{\sigma_k}
-\frac{\alpha^2_k}{2\sigma_k}v^{(new)}_{ik,3}-v^{(new)}_{ik,3}
\end{aligned}
\right].
\end{align}
where
\begin{eqnarray}
\gamma^{(new)}_{ik}&=&\mathbb{E}_{\boldsymbol\vartheta^{(old)}}\left\{z_{ik}|\mathbf{x}_i,\mathbf{t}_i,y_i\right\}\label{resp1}\\
v^{(new)}_{ik,1}&=&\mathbb{E}_{\boldsymbol\vartheta^{(old)}}\left\{\log v_i|\mathbf{x}_i,\mathbf{t}_i,y_i,z_{ik}=1\right\}\label{resp2}\\
v^{(new)}_{ik,2}&=&\mathbb{E}_{\boldsymbol\vartheta^{(old)}}\left\{1/v_i|\mathbf{x}_i,\mathbf{t}_i,y_i,z_{ik}=1\right\}\label{resp3}\\
v^{(new)}_{ik,3}&=&\mathbb{E}_{\boldsymbol\vartheta^{(old)}}\left\{v_i|\mathbf{x}_i,\mathbf{t}_i,y_i,z_{ik}=1\right\}\label{resp4}
\end{eqnarray}
The conditional expectation \eqref{resp1} is the well-known responsibility that the $k^{th}$ expert component takes for explaining the $i^{th}$ data point. This is given by the posterior probability
\begin{eqnarray}
\gamma^{(new)}_{ik}=\frac{\pi_k(\mathbf{t}_i|\boldsymbol\eta^{(old)})g(y_i|\alpha^{(old)}_k,\sigma^{(old)}_k,\mu_{k}(\mathbf{x}_i;\boldsymbol{\beta}^{(old)}_k))}{f(y_i|\mathbf{x}_i,\mathbf{t}_i;\boldsymbol\vartheta^{(old)})}
\end{eqnarray}
Notice that the conditional expectation $v^{(new)}_{ik,1}$ \eqref{resp2} is a constant in the maximisation of $Q(\boldsymbol\vartheta\mid\boldsymbol\vartheta^{(old)})$ with respect to $\boldsymbol\vartheta$. Therefore, we do not have to calculate it since it is not relevant for the purpose of obtaining the estimate of $\boldsymbol\vartheta$. However, it is important to note that it has a closed-form expression. This was also noted in \cite{franczak2014}. As in \cite{franczak2014}, the conditional expectations $v^{(new)}_{ik,2}$ and $v^{(new)}_{ik,3}$ are defined as follows
\begin{eqnarray}
v^{(new)}_{ik,2}&=&\sqrt{a^{(old)}_k/b^{(old)}_{ik}}R_\nu\left(\sqrt{a^{(old)}_kb^{(old)}_{ik}}\right)-\frac{2\nu}{b^{(old)}_{ik}}\\
v^{(new)}_{ik,3}&=&\sqrt{b^{(old)}_{ik}/a^{(old)}_{k}}R_\nu\left(\sqrt{a^{(old)}_kb^{(old)}_{ik}}\right)
\end{eqnarray}
where $\nu=1/2$, $a^{(old)}_k=2+(\alpha^{(old)}_k)^2/\sigma^{(old)}_k$, $b^{(old)}_{ik}=\{y_i-\mu_k(\mathbf{x}_i;\boldsymbol\beta^{(old)}_k)\}^2/\sigma^{(old)}_k$, for $i=1,2,\dots,n$ and $k=1,2,\dots,K$, and $R_\nu(z)=K_{\nu+1}(z)/K_{\nu}(z)$. These expressions are used when $b^{(old)}_{ik}>0$; in numerical implementations, exact zero residuals are replaced by a small positive tolerance. \\
\subsubsection*{M-step}
In the M-step of the EM-MM algorithm, we update the model parameters $\boldsymbol\theta^{(old)}$ by maximizing $Q(\boldsymbol\theta|\boldsymbol\theta^{(old)})$ with respect to $\boldsymbol\beta_k$, $\alpha_k$ and $\sigma_k$, for $k=1,2,\dots,K$, to obtain the respective updating equations as 
\begin{align}
\boldsymbol\beta^{(new)}_k
&=\left[\sum_{i=1}^n\gamma^{(new)}_{ik}v^{(new)}_{ik,2}\mathbf{x}_i\mathbf{x}^\top_i
-\frac{\sum_{i=1}^n\gamma^{(new)}_{ik}\mathbf{x}_i\left(\sum^{n}_{j=1}\gamma_{jk}^{(new)}\mathbf{x}^\top_j\right)}{\sum_{j=1}^n\gamma^{(new)}_{jk}v^{(new)}_{jk,3}}\right]^{-1}\nonumber\\
&\quad\times\left[\sum_{i=1}^n\gamma^{(new)}_{ik}v^{(new)}_{ik,2}\mathbf{x}_iy_i
-\frac{\sum_{i=1}^n\gamma^{(new)}_{ik}\mathbf{x}_i\left(\sum^{n}_{j=1}\gamma_{jk}^{(new)}y_j\right)}{\sum_{j=1}^n\gamma^{(new)}_{jk}v^{(new)}_{jk,3}}\right],\\
\alpha^{(new)}_k
&=\frac{\sum_{i=1}^n\gamma^{(new)}_{ik}\left(y_i-\mathbf{x}^\top_i\boldsymbol\beta^{(new)}_k\right)}{\sum_{i=1}^n\gamma^{(new)}_{ik}v^{(new)}_{ik,3}},\\
\sigma^{(new)}_k
&=\frac{\sum^{n}_{i=1}\gamma^{(new)}_{ik}v^{(new)}_{ik,2}\left(y_i-\mathbf{x}^\top_i\boldsymbol\beta^{(new)}_k\right)^2}{\sum_{i=1}^n\gamma^{(new)}_{ik}}\nonumber\\
&\quad-\frac{2\alpha^{(new)}_k\sum^{n}_{i=1}\gamma^{(new)}_{ik}\left(y_i-\mathbf{x}^\top_i\boldsymbol{\beta}^{(new)}_k\right)}{\sum_{i=1}^n\gamma^{(new)}_{ik}}\nonumber\\
&\quad+\frac{\alpha^{2(new)}_k\sum^{n}_{i=1}\gamma^{(new)}_{ik}v^{(new)}_{ik,3}}{\sum_{i=1}^n\gamma^{(new)}_{ik}}.
\end{align}

\subsubsection*{MM-step}
In the MM-step, we maximise $Q(\boldsymbol{\eta}\mid\boldsymbol{\eta}^{(old)})$ with
respect to the gating parameters $\boldsymbol{\eta}=(\boldsymbol{\eta}_{1},\dots,\boldsymbol{\eta}_{K})$
under the identifiability constraint $\boldsymbol{\eta}_{K}=\mathbf{0}$.
Writing $\tilde{\boldsymbol{\eta}}=(\boldsymbol{\eta}_{1}^{\top},\dots,\boldsymbol{\eta}_{K-1}^{\top})^{\top}\in\mathbb{R}^{(K-1)(q+1)}$
for the free parameters, we consider the concave objective function
\begin{equation}
Q(\tilde{\boldsymbol{\eta}}\mid\boldsymbol{\eta}^{(old)})=\sum_{i=1}^{n}\sum_{k=1}^{K}\gamma_{ik}^{(new)}\log\{\pi_{k}(\mathbf{t}_{i}\mid\boldsymbol{\eta})\},\label{Qeta_def}
\end{equation}
where $\pi_{k}(\mathbf{t}_{i}\mid\boldsymbol{\eta})$ is defined in
(\ref{logistic}) with $\boldsymbol{\eta}_{K}=\mathbf{0}$. Note that,
in the present M-step, the weights $\gamma_{ik}^{(new)}$ are fixed,
coming from the E-step.\\
Let $\mathbf{T}\in\mathbb{R}^{n\times(q+1)}$ be the design matrix
with $i$th row $\mathbf{t}_{i}^{\top}$, $\mathbf{E}^{(old)}=[\boldsymbol{\eta}_{1}^{(old)},\dots,\boldsymbol{\eta}_{K-1}^{(old)}]\in\mathbb{R}^{(q+1)\times(K-1)}$
and $\mathbf{G}^{(old)}=\mathbf{T}^{\top}(\boldsymbol{\Gamma}-\boldsymbol{\Pi}^{(old)})\in\mathbb{R}^{(q+1)\times(K-1)}$,
where $\boldsymbol{\Pi}^{(old)}=[\mathbf{p}_{1}(\tilde{\boldsymbol{\eta}}^{(old)})^{\top};\dots;\mathbf{p}_{n}(\tilde{\boldsymbol{\eta}}^{(old)})^{\top}]$, whose $i^{th}$ row is $\mathbf{p}_{i}(\tilde{\boldsymbol{\eta}}^{(old)})^{\top}$, with $\mathbf{p}_{i}(\tilde{\boldsymbol{\eta}})=(\pi_{1}(\mathbf{t}_{i}\mid\boldsymbol{\eta}),\dots,\pi_{K-1}(\mathbf{t}_{i}\mid\boldsymbol{\eta}))^{\top}$, and $\boldsymbol{\Gamma}=[\boldsymbol{\gamma}_{1}^{\top};\dots;\boldsymbol{\gamma}_{n}^{\top}]$ with $\boldsymbol{\gamma}_{i}=(\gamma_{i1}^{(new)},\dots,\gamma_{i,K-1}^{(new)})^{\top}$ being the corresponding responsibilities.\\
The MM update equation is 
\begin{equation}
\mathbf{E}^{(new)}=\mathbf{E}^{(old)}+2(\mathbf{T}^{\top}\mathbf{T})^{-1}\mathbf{G}^{(old)}(\mathbf{I}_{K-1}+\mathbf{1}\mathbf{1}^{\top}).\label{LB_update_matrix_one}
\end{equation}
The derivation of the MM update equation \eqref{LB_update_matrix_one} can be found in Appendix \ref{MMstep}.

\noindent Let $\mathcal{M}(\tilde{\boldsymbol{\eta}}\mid\tilde{\boldsymbol{\eta}}^{(old)})$ be the minoriser of $Q(\tilde{\boldsymbol{\eta}}\mid\boldsymbol{\eta}^{(old)})$, then by construction, 
\[
Q(\tilde{\boldsymbol{\eta}}^{(new)}\mid\boldsymbol{\eta}^{(old)})\ge\mathcal{M}(\tilde{\boldsymbol{\eta}}^{(new)}\mid\tilde{\boldsymbol{\eta}}^{(old)})\ge\mathcal{M}(\tilde{\boldsymbol{\eta}}^{(old)}\mid\tilde{\boldsymbol{\eta}}^{(old)})=Q(\tilde{\boldsymbol{\eta}}^{(old)}\mid\boldsymbol{\eta}^{(old)}),
\]
so a single update (\ref{LB_update_matrix_one}) increases (or leaves unchanged)
the gating contribution to the $Q$-function. In the overall EM--MM
hybrid algorithm, we perform one update for M-step~1 (the
expert parameters), and then one update for the MM-step given
above (the gating parameters). Since each of these two updates is
chosen not to decrease the relevant part of $Q(\cdot\mid\cdot)$ with
the E-step weights fixed, the combined outer iteration is a valid
generalised EM step, and hence yields a nondecreasing sequence of
observed-data log-likelihood values.\\
The above EM-MM steps are repeated until convergence. 
\subsection{Model Selection}\label{subsec4}
In practice, prior to estimating the SALMoE model, we need to choose $K$, the number of expert components. Data-driven procedures are usually employed for the purpose of choosing the optimal value of $K$. A popular data-driven procedure for choosing $K$ is by using the likelihood-based information criteria (IC) (see Chapter 7 of \cite{fruhwirth2019}). In this paper, we propose to use the Bayesian information criterion (BIC) \cite{schwarz1978},
\begin{equation}
    \text{BIC}(K)=-2\times\ell(\hat{\boldsymbol\vartheta})+\log(n)\times \text{df}(K)
\end{equation}
and the integrated classification likelihood (ICL) \cite{biernacki2000},
\begin{equation}
    \text{ICL}(K)=-2\times\ell_{\mathrm{cl}}(\hat{\boldsymbol\vartheta})+\log(n)\times \text{df}(K),
\end{equation}
where $\ell(\hat{\boldsymbol\vartheta})$ is the value of the observed-data log-likelihood at convergence of the EM-MM algorithm and $\ell_{\mathrm{cl}}(\hat{\boldsymbol\vartheta})$ denotes the classification log-likelihood evaluated at the MAP allocations. Specifically, let $\hat\gamma_{ik}$ be the fitted posterior responsibility in \eqref{resp1} evaluated at $\hat{\boldsymbol\vartheta}$. Then
\begin{equation}\label{MAP}
\hat{z}_{ik}^{\mathrm{MAP}}=\begin{cases}
1, & \text{if } k=\min\operatorname*{arg\,max}_{1\le j\le K}\hat\gamma_{ij},\\
0, & \text{otherwise},
\end{cases}
\end{equation}
and
\[
\ell_{\mathrm{cl}}(\hat{\boldsymbol\vartheta})=\sum_{i=1}^{n}\sum_{k=1}^{K}\hat{z}_{ik}^{\mathrm{MAP}}\log\left\{\pi_k(\mathbf{t}_i\mid\hat{\boldsymbol\eta})g(y_i\mid\hat\alpha_k,\hat\sigma_k,\mu_k(\mathbf{x}_i;\hat{\boldsymbol\beta}_k))\right\}.
\]
The use of $\min$ in \eqref{MAP} gives a deterministic rule for resolving ties. Also, $\text{df}(K)$ is the number of estimated parameters in the fitted model. For the SALMoE, $\text{df}(K)=K(p+q+4)-q-1$.\\
In addition to the BIC and ICL, we use the PanIC framework of \citet{Nguyen2024PanIC}, which yields consistent order selection under easily verifiable regularity conditions. In the likelihood setting, the PanIC is defined as
\begin{equation}\label{panIC}
\text{PanIC}(K)=-2\ell(\hat{\boldsymbol{\vartheta}}_{K})+2nP_{K,n},
\end{equation}
where 
\[
P_{K,n}=\alpha\text{df}(K)\frac{\log_{+}^{(\beta)}(n)}{\sqrt{n}},\qquad\alpha>0,
\]
is a penalty with
\[
\log_{+}^{(\beta)}(n)=\underbrace{\log_{+}\circ\log_{+}\circ\cdots\circ\log_{+}}_{\beta\ \text{times}}(n),\qquad\beta\in\mathbb{N},
\]
and $\log_{+}(x)=\max\{1,\log(x)\}$.\\
Note that the PanIC requires a choice of the values of the constants $\alpha$ and $\beta$. As in \cite{Nguyen2024PanIC}, our choice of $\alpha$ is coupled with the BIC. Thus, we choose $\alpha=\alpha(\beta,\nu)>0$ which is a value of $\alpha$ that makes the PanIC with order $\beta$ to be equal to the BIC when the sample size is $\nu$, defined as:
\begin{equation}\label{pan_calibrate}
    \alpha(\beta,\nu)=\frac{\log(\nu)}{2\sqrt{\nu}\log_{+}^{(\beta)}(\nu)}
\end{equation}
For more details about the implementation of the PanIC, see our numerical experiments in Section \ref{choose_k}.\\
For more details about the PanIC for the proposed SALMoE model, see Appendix \ref{PanIC}.\\
The optimal value of $K$ is the one that corresponds to the minimum values of the $\text{BIC}(K)$, $\text{ICL}(K)$ or $\text{PanIC}(K)$ over a range of values of $K\in\{1,2,\dots,K_{max}\}$, where $K_{max}$ is the largest value of $K$ considered.
\subsection{Prediction using the SALMoE model}\label{subsec5}
In this section, we obtain the predictive distribution of the SALMoE model which can be used to make predictions about the response variable $y$ given new values of the covariates $(\mathbf{x},\mathbf{t})$. We also present some useful statistical measures to assess the predictive performance of the SALMoE. 
\subsubsection{Predictive distribution of the SALMoE}
The predictive distribution is a probability distribution of the response variable $y$, which is obtained by substituting the MLE $\hat{\boldsymbol\vartheta}$ into \eqref{model2} to give
\begin{eqnarray}\label{pred_dist}
f(y|\mathbf{x},\mathbf{t};\hat{\boldsymbol\vartheta})=\sum_{k=1}^K\pi_k(\mathbf{t}|\hat{\boldsymbol\eta})f_k(y|\mathbf{x};\hat{\boldsymbol\theta}_k)
\end{eqnarray}
From \citet{chamroukhi2016,chamroukhi2017}, the mean of the predictive distribution \eqref{pred_dist} is given by
\begin{equation}
\mu_{\hat{\boldsymbol\vartheta}}
=\mathbb{E}_{\hat{\boldsymbol\vartheta}}[Y|\mathbf{x},\mathbf{t}]
=\sum_{k=1}^K\pi_k(\mathbf{t}|\hat{\boldsymbol\eta})\mathbb{E}_{\hat{\boldsymbol\vartheta}}[Y|Z=k,\mathbf{x}].\label{exp_MoE}
\end{equation}
The corresponding variance is
\begin{align}
\sigma^2_{\hat{\boldsymbol\vartheta}}
&=\mathbb{V}_{\hat{\boldsymbol\vartheta}}[Y|\mathbf{x},\mathbf{t}]\nonumber\\
&=\sum_{k=1}^K\pi_k(\mathbf{t}|\hat{\boldsymbol\eta})\left[(\mathbb{E}_{\hat{\boldsymbol\vartheta}}[Y|Z=k,\mathbf{x}])^2+\mathbb{V}_{\hat{\boldsymbol\vartheta}}(Y|Z=k,\mathbf{x})\right]
-\mu^2_{\hat{\boldsymbol\vartheta}}.\label{var_MoE}
\end{align}
where $\mathbb{E}_{\hat{\boldsymbol\vartheta}}[Y|Z=k,\mathbf{x}]$ and $\mathbb{V}_{\hat{\boldsymbol\vartheta}}[Y|Z=k,\mathbf{x}]$ are the mean and variance, respectively, of the $k^{th}$ expert component defined as follows for the SALMoE model
\begin{eqnarray}
\mathbb{E}_{\hat{\boldsymbol\vartheta}}[Y|Z=k,\mathbf{x}]&=&\mathbf{x}^\top\hat{\boldsymbol\beta}_k+\hat{\alpha}_k\label{exp_SALMoE}\\
\mathbb{V}_{\hat{\boldsymbol\vartheta}}[Y|Z=k,\mathbf{x}]&=&\hat{\alpha}^2_k+\hat{\sigma}_k\label{var_SALMoE}
\end{eqnarray}
Substituting \eqref{exp_SALMoE} and \eqref{var_SALMoE} into \eqref{exp_MoE} and \eqref{var_MoE}, respectively, we obtain the mean and variance of the SALMoE model.\\
For given values of the covariates $\mathbf{x}$ and $\mathbf{t}$, the predicted value of the response variable $y$, denoted $\hat{y}$, is given by the mean of the SALMoE model \cite{chamroukhi2016}, that is $\hat{y}=\mathbb{E}_{\hat{\boldsymbol\vartheta}}[Y|\mathbf{x},\mathbf{t}]$.\\
However, from a probabilistic perspective, a predictive distribution is typically desirable because it expresses our uncertainty about the predicted value of the response variable $y$ for each of the values of the covariates $\mathbf{x}$ and $\mathbf{t}$ \cite{bishop2006}.\\
Consider a random sample $\{(y_i,\mathbf{x}_i,\mathbf{t}_i):i=1,2,\dots,n\}$. To quantify the uncertainty of our predictions, we will use the following approximate $95\%$ pointwise predictive interval given by
\begin{equation}
    \hat{y}_i\pm 2\times\text{sd}_{\hat{\boldsymbol\vartheta}}(y_i|\mathbf{x}_i,\mathbf{t}_i)\quad \text{for }i=1,2,\dots,n
\end{equation}
where $\text{sd}_{\hat{\boldsymbol\vartheta}}(y_i|\mathbf{x}_i,\mathbf{t}_i)=\sqrt{\mathbb{V}_{\hat{\boldsymbol\vartheta}}[y_i|\mathbf{x}_i,\mathbf{t}_i]}$, for $i=1,2,\dots,n$, is the pointwise standard deviation of the predictive distribution of the SALMoE model.
\subsection{Model-based clustering using the SALMoE model}\label{subsec6}
The SALMoE model can be used to obtain a partition of an observed dataset $\{(\mathbf{x}_i,\mathbf{t}_i,y_i):i=1,2,\dots,n\}$ into $K$ clusters. Each cluster is associated with a given expert component. Given $\hat{\boldsymbol{\vartheta}}$ obtained at convergence of the EM-MM algorithm, the posterior probability
\begin{eqnarray}
\hat\gamma_{ik}=\frac{\hat\pi_k(\mathbf{t}_i|\hat{\boldsymbol\eta})g(y_i|\hat\alpha_k,\hat\sigma_k,\mu_k(\mathbf{x}_i;\hat{\boldsymbol{\beta}}_k))}{f(y_i|\mathbf{x}_i,\mathbf{t}_i;\hat{\boldsymbol\vartheta})}
\end{eqnarray}
provides a soft allocation of the data point $(\mathbf{x}_i,\mathbf{t}_i,y_i)$ into the $k^{th}$ cluster. The corresponding hard allocation is the MAP allocation $\hat{z}_{ik}^{\mathrm{MAP}}$ defined in \eqref{MAP}, which estimates the latent component indicator variable $z_{ik}$.\\
\section{Simulation study}\label{sec3}
In this section, we perform an extensive numerical study:
\begin{enumerate}
    \item to evaluate the performance of the proposed EM-MM algorithm for fitting the proposed SALMoE model;
    \item to demonstrate the robustness of the SALMoE model compared to the GMoE model.% and STMoE model \cite{chamroukhi2017};
    \item to identify the most suitable information criteria for choosing the number of expert components for the SALMoE model; and
    \item to evaluate the clustering capability of the SALMoE using the approach in Section \ref{subsec6}.
\end{enumerate}
\paragraph{Initialization}
The parameters of the SALMoE model, $\boldsymbol{\vartheta}$, are initialized using a similar approach outlined in \cite{chamroukhi2016,chamroukhi2017}. The data are randomly partitioned into $K$ groups and a GMoE model with $K$ components is fitted to the data. We use the estimates of the component standard deviations to initialize the scale parameters $\sigma_k$, for $k=1,2,\dots,K$. The skewness parameters $\alpha_k$, for $k=1,2,\dots,K$, are initialized similarly to the skewness parameters of the STMoE model of \cite{chamroukhi2017}. This process is repeated for 30 times and the parameter estimates that produce the largest likelihood value are used to initialize the algorithm.
\paragraph{Stopping rule}
To assess convergence, we use the relative log-likelihood increment
\begin{equation}\label{converge_crit}
\frac{\ell(\boldsymbol\vartheta^{(new)})-\ell(\boldsymbol\vartheta^{(old)})}{|\ell(\boldsymbol\vartheta^{(old)})|}.
\end{equation}
Convergence is declared whenever \eqref{converge_crit} is less than a prespecified threshold $\epsilon$. In our simulations and real data analysis, we set $\epsilon=10^{-5}$.
\subsection{Estimation performance}
We consider two scenarios to demonstrate the performance of the EM-MM algorithm. In each scenario, we generate $100$ samples of sizes $n=100, 500, 1000$ and $2000$ from a $K=2$ component SALMoE model. To measure the accuracy of the estimated parameters, we make use of the mean squared error (MSE) and BIAS, respectively,
\begin{eqnarray}
    \text{MSE}(\hat\theta_k)=(\hat\theta_k-\theta_k)^2~~~~\text{and}~~~~~\text{BIAS}(\hat\theta_k)=(\hat\theta_k-\theta_k)
\end{eqnarray}
where $\hat\theta_k$ and $\theta_k$ are the estimated and true parameter values, respectively.\\
These measures are averaged over the $100$ replications.
\subsubsection{Scenario 1}
In this scenario, the response values $y_i$, for $i=1,2,\dots,n$, are generated from a $K=2$ component SALMoE model \eqref{SALMoE}. The covariates $(\mathbf{x}_i,\mathbf{t}_i)$ are generated such that $\mathbf{x}_i=\mathbf{t}_i=(1,x_i)^\top$, where $x_i$ is generated from the uniform distribution on the interval $(-1,1)$. The parameter values used for this scenario are given in Table \ref{tab:sim_params1}. 
\begin{table}[]
    \centering
    \caption{Parameter values used in the simulation}
    \label{tab:sim_params1}
    \begin{tabular}{|c|c|c|c|c|c|}
    \hline
         &\multicolumn{4}{|c|}{Parameters}  \\
    \hline
        Component 1&$\boldsymbol{\eta}_1=(0,10)$&$\boldsymbol{\beta}_1=(0,1)$&$\alpha_1=1$&$\sigma_1=0.1$\\
        Component 2&$\boldsymbol{\eta}_2=(0,0)$&$\boldsymbol{\beta}_2=(0,-1)$&$\alpha_2=0.8$&$\sigma_2=0.1$\\
    \hline
    \end{tabular}
\end{table}
The results of the simulation are given in Table \ref{sim_res1}. We can see that the MSE generally decreases as the sample size increases, which is evidence in support of the effectiveness of the proposed EM-MM algorithm. 
\begin{table}[htbp]
\centering
\caption{Average Bias and MSE for fitted parameters of the SALMoE over the 100 samples in Scenario 1.}
\renewcommand{\arraystretch}{1.2}
\setlength{\tabcolsep}{2.5pt}

\begin{tabular}{c c cc cc cc cc cc}
\toprule
\multirow{2}{*}{Sample Size} & \multirow{2}{*}{Measure} 
& \multicolumn{10}{c}{Parameters} \\
\cmidrule(lr){3-12}
&&$\eta_{10}$&$\eta_{11}$&$\beta_{10}$&$\beta_{11}$&$\beta_{20}$&$\beta_{21}$&$\sigma_1$&$\sigma_2$&$\alpha_1$&$\alpha_2$\\
\midrule

\multirow{2}{*}{$n=100$}
& MSE&2.1218&41.1316 &0.6960&0.5399&0.8455&0.5923&0.0897&0.1684&0.0726&0.4333\\
& BIAS&-0.0975&4.7822&-0.6336&0.5178&-0.7097&-0.5594&-0.1714&-0.2496& 0.1724&0.5257\\
\multirow{2}{*}{$n=500$}
& MSE &0.1769 &8.2204& 0.1152&0.0837&0.0661&0.0712& 0.0197&0.0061&0.0135&0.0736\\
& BIAS  &-0.0185 &1.0736&-0.1086&0.0853&-0.0791&-0.0699&-0.0348&-0.0158&0.0202 &0.0734 \\
\multirow{2}{*}{$n=1000$}
& MSE &0.1100&2.9331&0.0157&0.0212&0.0344&0.0223&0.0009&0.0091&0.0044&0.0208\\
& BIAS  &0.0100 &0.3331&-0.0258&0.0234&-0.0288&-0.0224&-0.0013&-0.0109& 0.0097 &0.0145\\
\multirow{2}{*}{$n=2000$}
& MSE &0.0499 &1.7055&0.0069&0.0101&0.0164&0.0119&0.0004&0.0040&0.0026&0.0096   \\
& BIAS  &0.0067 &0.3974&-0.0123&0.0151&-0.0159&-0.0146&-0.0014&-0.0036&0.0024 &0.0108\\
\bottomrule
\end{tabular}
\label{sim_res1}
\end{table}
In addition to the results in Table \ref{sim_res1}, Figure \ref{fig:sim1} gives a plot of the estimated mean function, variance function, gating functions and component mean functions along with their true counterparts for sample size $n=500$. The estimated functions are obtained by averaging over the 100 fitted functions. As can be clearly seen from the figure, the estimated functions are in line with the true functions which provides further evidence of the effectiveness of the proposed modelling framework. Lastly, Figure \ref{fig:loglik} plots the observed log likelihood values at each iteration of the EM-MM algorithm for a single instance with $n=500$. It can be seen from the figure that the likelihood values are increasing at each iteration.
\begin{figure}[htbp]
    \centering
    \begin{subfigure}[b]{0.45\textwidth}
        \includegraphics[width=\textwidth]{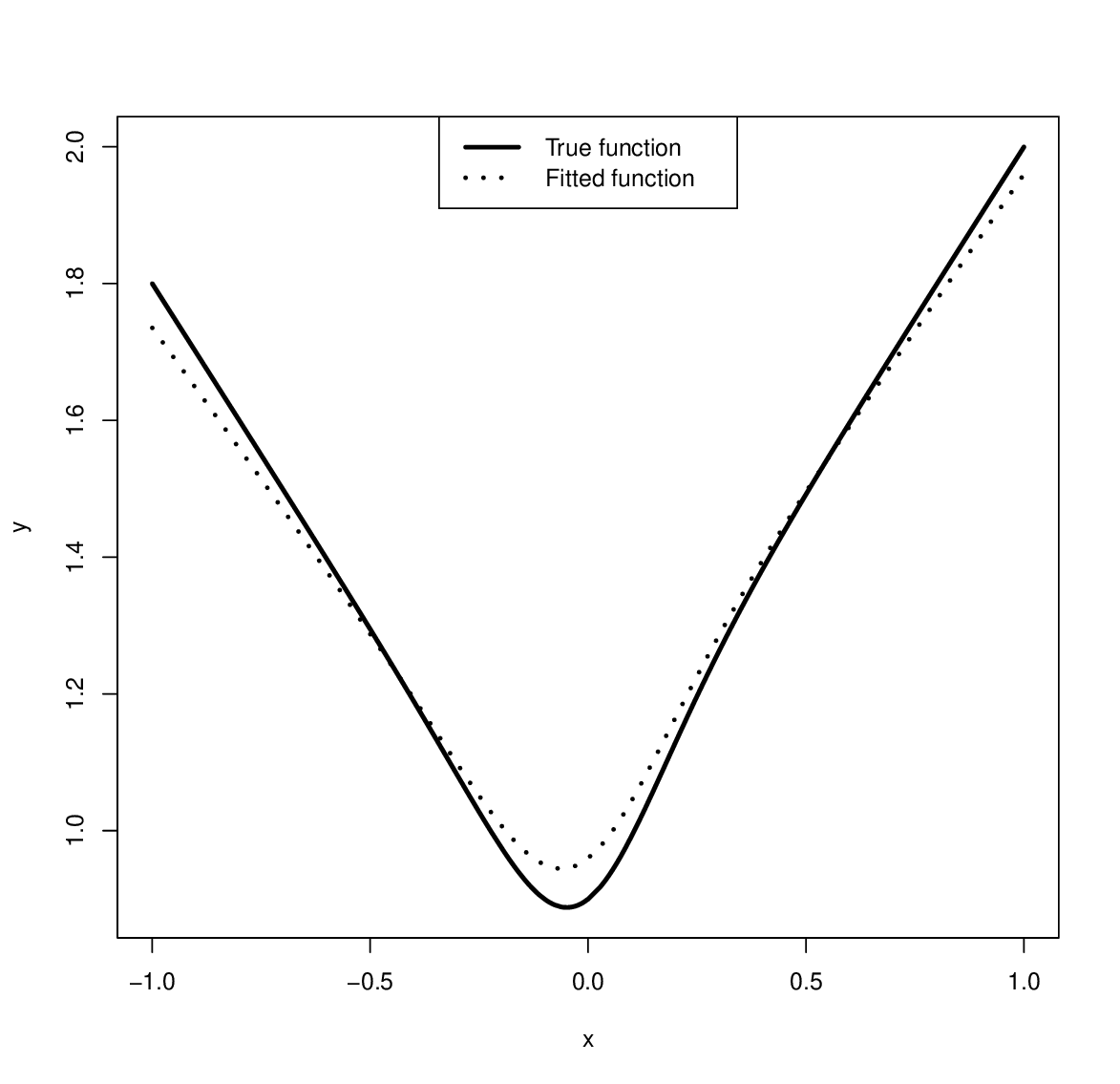}
        \caption{Mean function}
    \end{subfigure}
    \hfill
    \begin{subfigure}[b]{0.45\textwidth}
        \includegraphics[width=\textwidth]{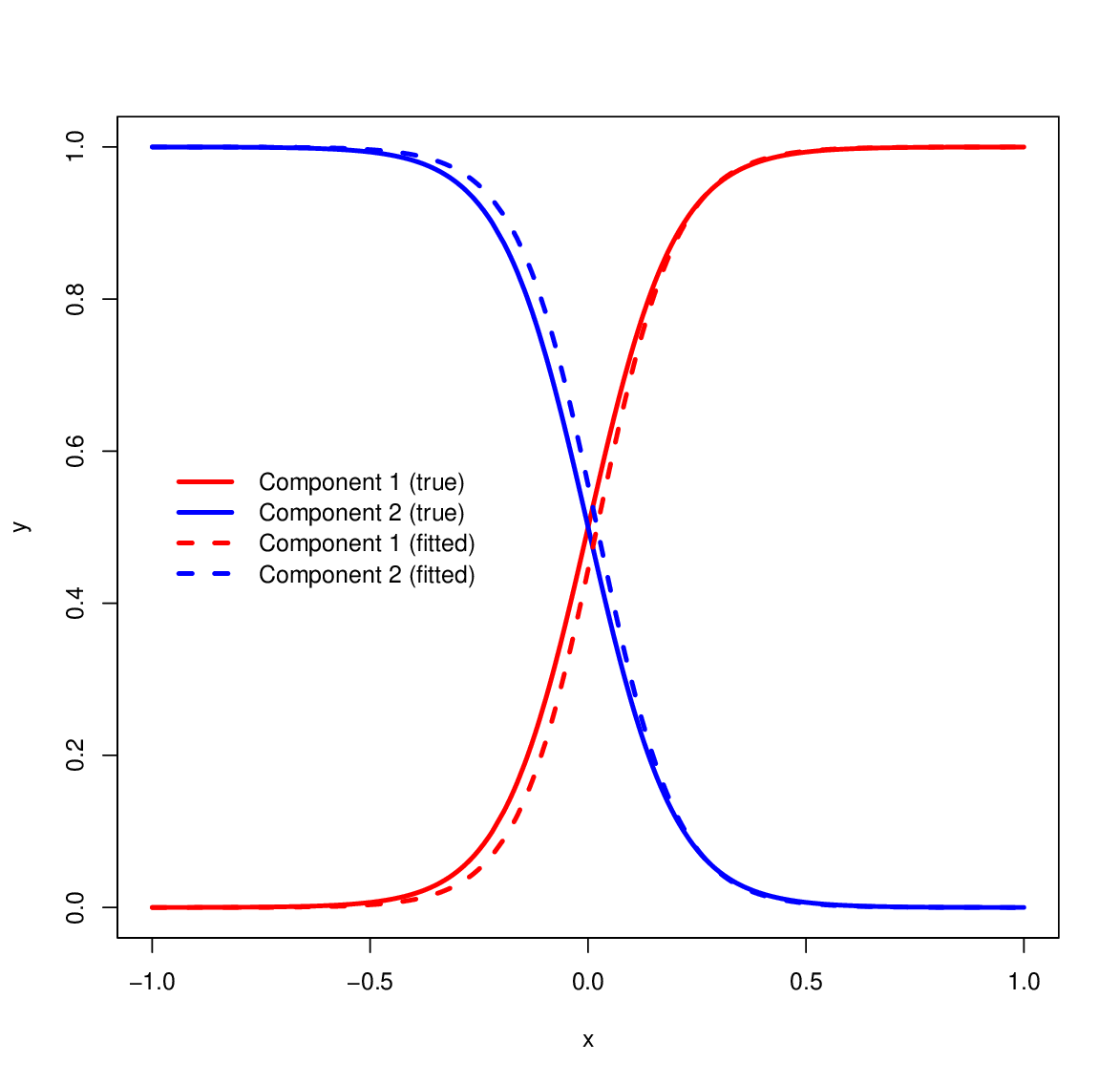}
        \caption{Mixing proportion functions}
    \end{subfigure}
    \vspace{1cm}
    \begin{subfigure}[b]{0.45\textwidth}
        \includegraphics[width=\textwidth]{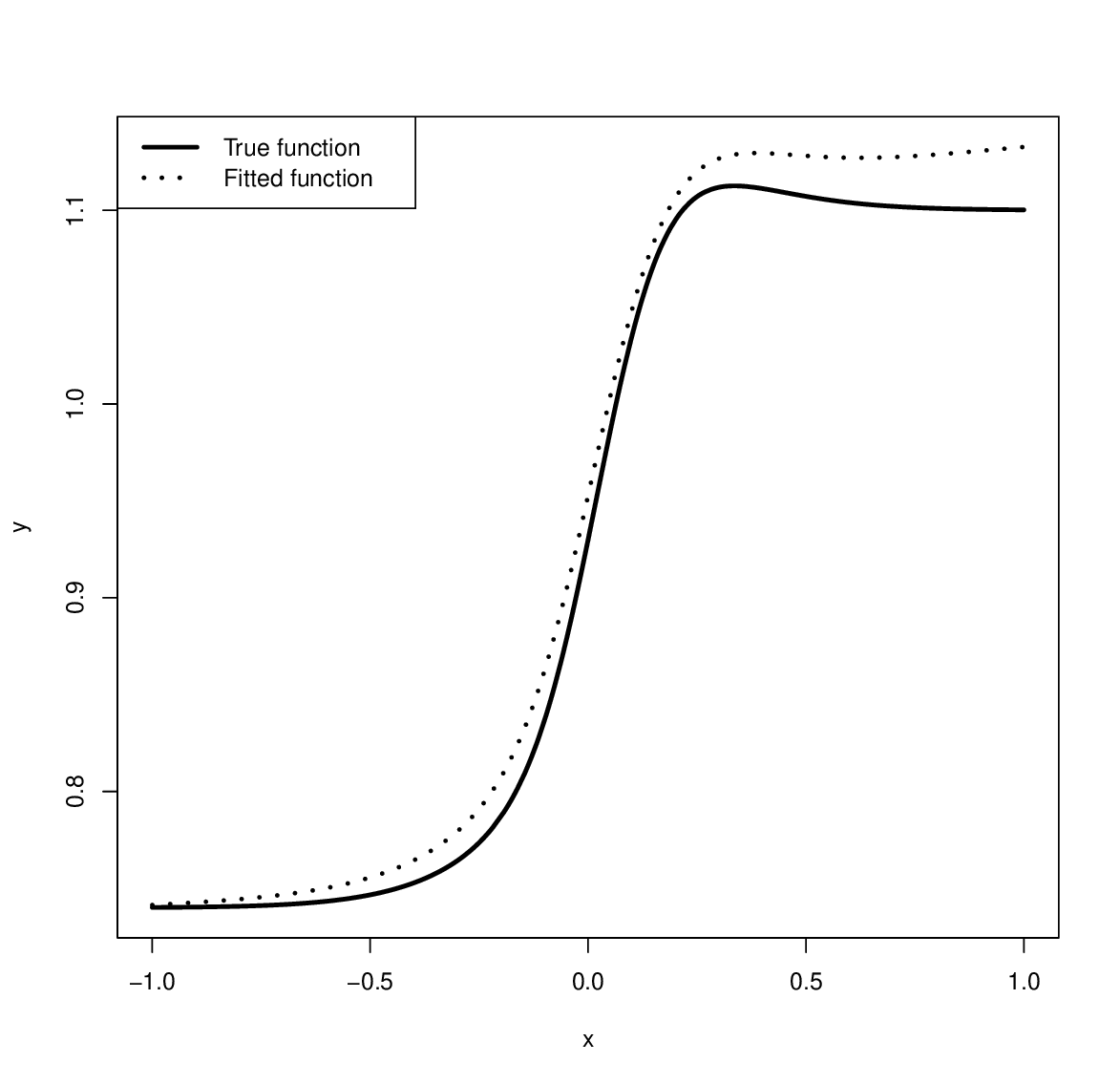}
        \caption{Variance function}
    \end{subfigure}
    \hfill
    \begin{subfigure}[b]{0.45\textwidth}
        \includegraphics[width=\textwidth]{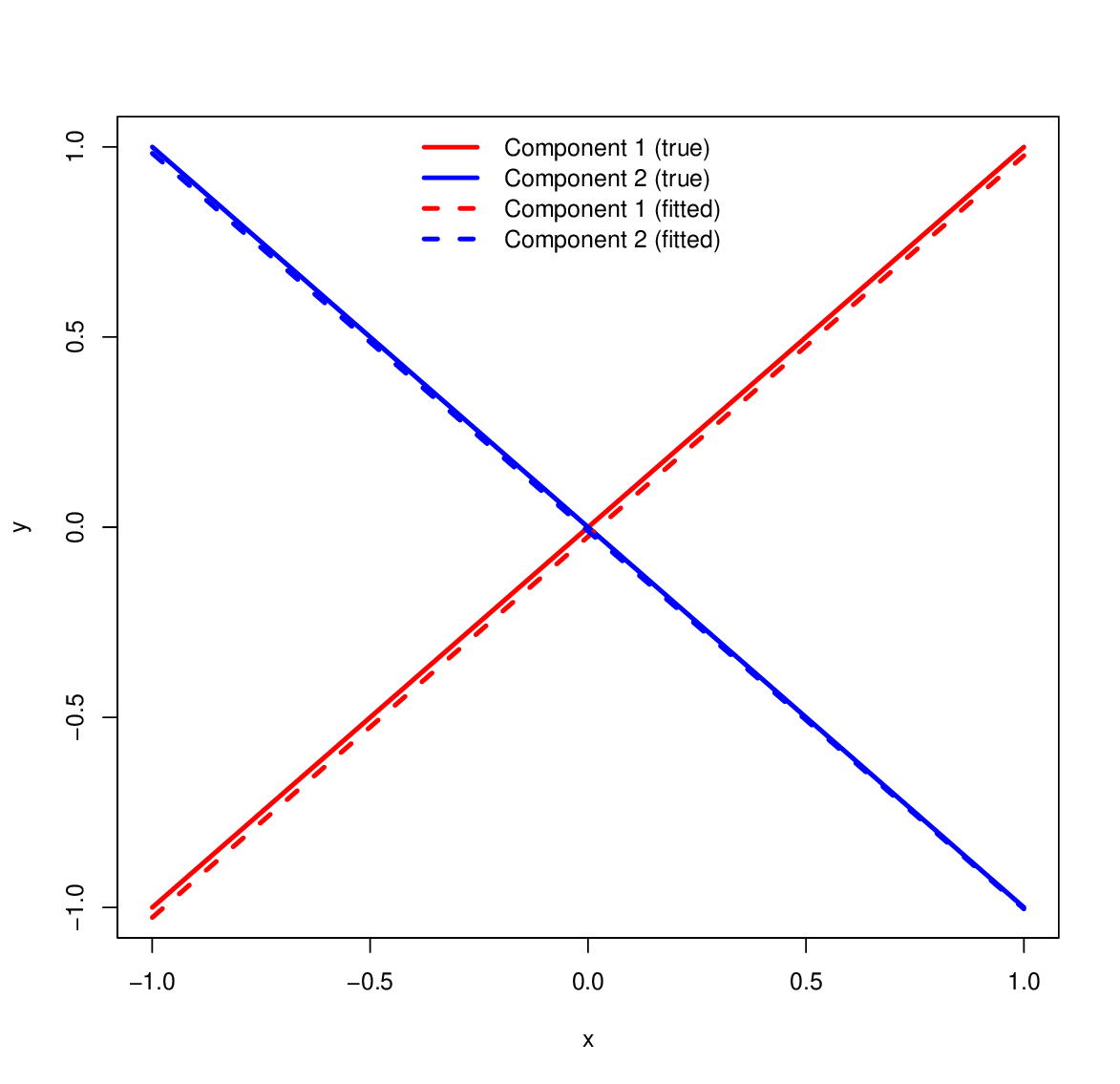}
        \caption{Component mean functions}
    \end{subfigure}
    \caption{Plot of the estimated functions obtained from averaging over the 100 fitted SALMoE models. For each estimated function, we also plot its corresponding true function.}
    \label{fig:sim1}
\end{figure}

\begin{figure}
    \centering
    \includegraphics[width=0.9\textwidth]{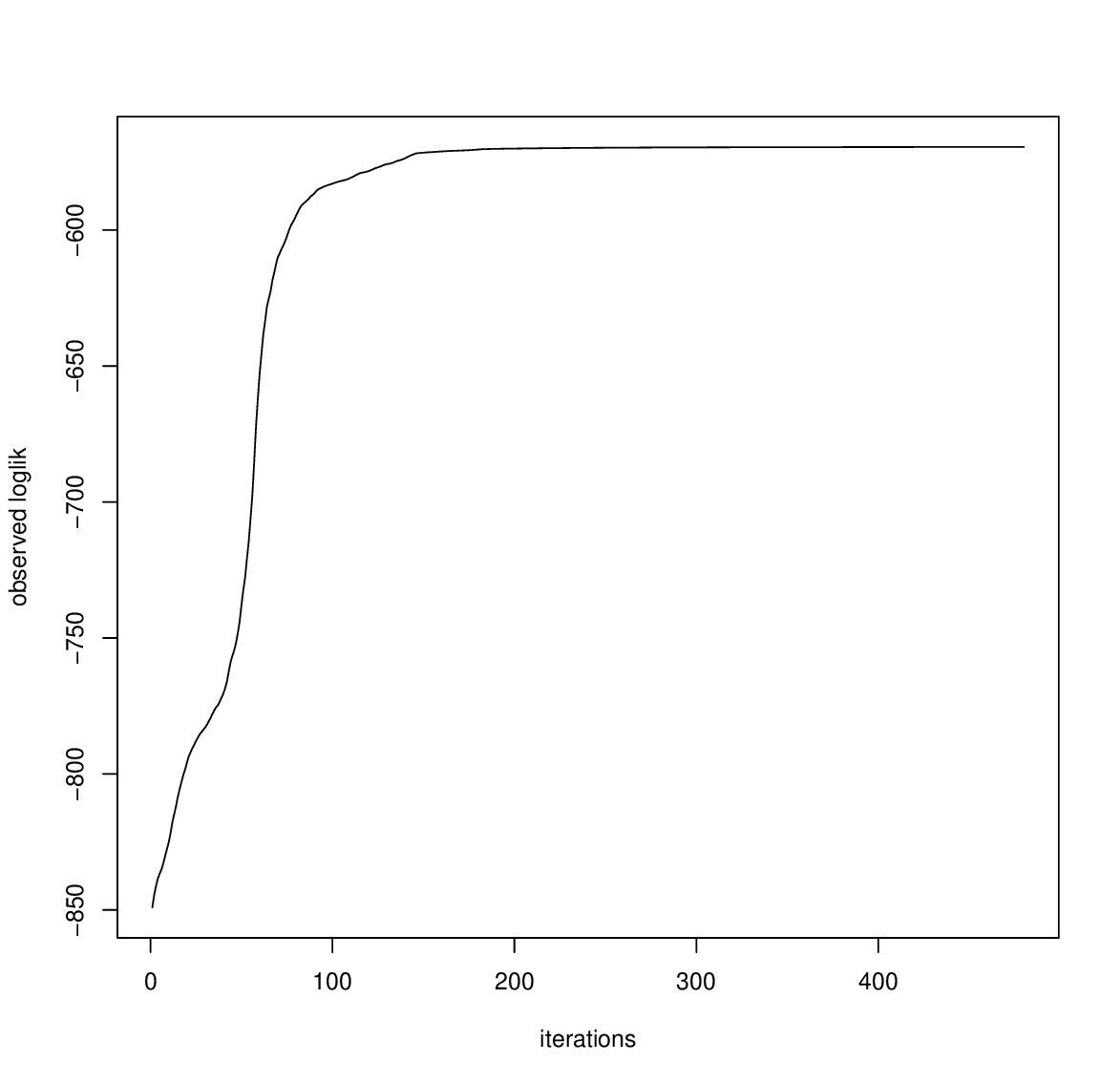}
    \caption{Log likelihood values at each iteration of the EM-MM algorithm for a single instance with $n=500$.}
    \label{fig:loglik}
\end{figure}
\subsubsection{Scenario 2}
In this scenario, the response values $y_i$, for $i=1,2,\dots,n$, are generated from the $K=2$ component SALMoE model \eqref{SALMoE}. The covariates $(\mathbf{x}_i,\mathbf{t}_i)$ are generated such that $\mathbf{x}_i=\mathbf{t}_i=(1,x_{i1},x_{i2},x_{i3})^\top$, where $x_{i1}$, $x_{i2}$ and $x_{i3}$ are generated independently from the uniform distribution on the interval $(-1,1)$. The parameter values used for this scenario are $\boldsymbol{\eta}_1=(0,5,-2,10)$, $\boldsymbol{\beta}_1=(0,-1,0.5,1)$ and $\boldsymbol{\beta}_2=(0,1,0.5,-1)$. The rest of the parameters are the same as in Scenario 1.
Table \ref{sim_res2} gives the results of the simulation. The results show that the proposed method maintains its good performance even when we have multiple covariates. This demonstrates the scalability of the proposed method.
\begin{sidewaystable}[htbp]
\centering
\caption{Average Bias and MSE for the fitted parameters of the SALMoE model over the 100 samples in Scenario 2.}
\renewcommand{\arraystretch}{1.2}
\setlength{\tabcolsep}{2.5pt}

\begin{tabular}{c cccc cccccc cc cc cc cc}
\toprule
\multirow{2}{*}{Sample Size} & \multirow{2}{*}{Measure} 
& \multicolumn{16}{c}{Parameters} \\
\cmidrule(lr){3-18}
&&$\eta_{10}$&$\eta_{11}$&$\eta_{12}$&$\eta_{13}$&$\beta_{10}$&$\beta_{11}$&$\beta_{12}$&$\beta_{13}$&$\beta_{20}$&$\beta_{21}$&$\beta_{22}$&$\beta_{23}$&$\sigma_1$&$\sigma_2$&$\alpha_1$&$\alpha_2$\\
\midrule

\multirow{2}{*}{$n=100$}
& MSE&1.2711 &14.2313  &3.0418 &16.8822 &0.3019 &0.1985 &0.0503&0.2258&0.2466  &0.1136 &0.0355 &0.2007&0.1259 &0.1815 &0.0899 &0.3383\\
& BIAS&-0.3768&-0.1032 &-0.0099  &0.1800 &-0.2014 &-0.2292 & 0.2115  &0.4231&  0.1087 & 1.3790  &0.0548  &1.0768 &-0.1087 &-1.3790& -0.0548 &-1.0768\\
\multirow{2}{*}{$n=500$}
& MSE &0.0862& 0.9980 &0.3578&2.3326 &0.0033& 0.0016& 0.0011 &0.0038& 0.0028 &0.0015& 0.0016 &0.0031 &0.0023 &0.0020& 0.0059 &0.0035\\
& BIAS  & -0.0154  &0.0071  &0.0087& -0.0017 &-0.0027& -0.0017&  0.0142 & 0.0119 & 0.0127&  0.3336 &-0.0556 & 0.3604 &-0.0127 &-0.3336  &0.0556&-0.3604\\
\multirow{2}{*}{$n=1000$}
& MSE &0.0493& 0.4550 &0.1406 &0.9844 &0.0006& 0.0006 &0.0006 &0.0012 &0.0008 &0.0007 &0.0006& 0.0014& 0.0006 &0.0006& 0.0031& 0.0021\\
& BIAS &-0.0048& -0.0025 &-0.0002& -0.0038 & 0.0036 & 0.0018  &0.0087 & 0.0010& -0.0104 &0.1693 &-0.0358 & 0.2277  &0.0104 &-0.1693 & 0.0358& -0.2277\\
\multirow{2}{*}{$n=2000$}
& MSE &0.0206& 0.2417 &0.0923 &0.5536 &0.0005 &0.0004 &0.0003 &0.0006 &0.0003& 0.0002 &0.0002 &0.0004 &0.0004 &0.0003 &0.0014 &0.0011  \\
& BIAS &-0.0038 &0.0024& -0.0034 & 0.0021 & 0.0029 &0.0009  &0.0027&  0.0010&  0.0054  &0.1623 &-0.0673 &0.2752&-0.0054&-0.1623 &0.0673&-0.2752\\
\bottomrule
\end{tabular}
\label{sim_res2}
\end{sidewaystable}
\subsection{Robustness performance}
We consider two scenarios to examine the robustness of the SALMoE model in the presence of outliers and skewness compared to the GMoE model. To measure the goodness of the fitted model, we calculate the root MSE (RMSE) of the fitted mean function
\begin{equation}\label{mse_mean}
    \left(\frac{1}{n}\sum_{i=1}^n\left(\mathbb{E}_{\boldsymbol\vartheta}[Y_i|\mathbf{x}_i,\mathbf{t}_i]-\mathbb{E}_{\hat{\boldsymbol\vartheta}}[Y_i|\mathbf{x}_i,\mathbf{t}_i]\right)^2\right)^{1/2}
\end{equation}
where $\mathbb{E}_{\boldsymbol\vartheta}[Y|\mathbf{x},\mathbf{t}]$ and $\mathbb{E}_{\hat{\boldsymbol\vartheta}}[Y|\mathbf{x},\mathbf{t}]$ are the true and fitted mean functions, respectively.
\subsubsection{Scenario 1}
In the first scenario, we generate $100$ samples of size $n=500$ from both the SALMoE model and the GMoE model using the parameter values in Table \ref{tab:sim_params1}. For the GMoE case, the same gating and regression parameters are used and the SAL skewness parameters are ignored. For each generated sample, we randomly select $c\%$ of the data points and substitute them with noisy data points obtained by generating $x$ from the uniform distribution on the interval $(-1,1)$ and setting $y$ to $-2$. We consider the following values for $c=1,2,3,4$ and $5$. The results of the experiment are given in Figure \ref{fig:robust1} for the data generated from the SALMoE model and Figure \ref{fig:robust2} for the data generated from the GMoE model. From the figures, we can see that the SALMoE model clearly outperforms the GMoE model in both cases. These results clearly highlight the robustness of the proposed SALMoE model.
\begin{figure}
    \centering
    \includegraphics[width=\textwidth]{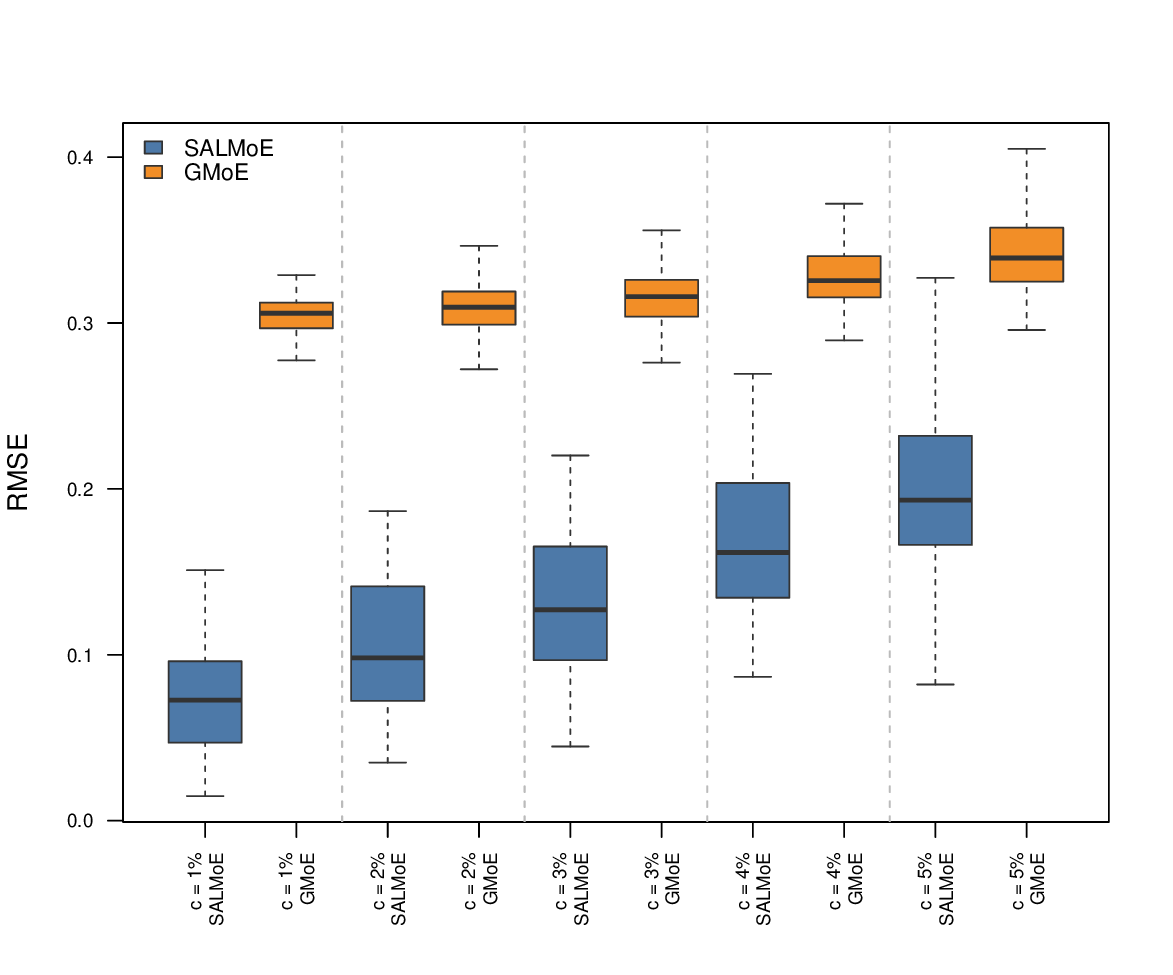}
    \caption{RMSE values for the SALMoE and GMoE models when the data are generated from the SALMoE model with $c\%$ of the data substituted with noise data points and $c=1,2,3,4$ and $5$.}
    \label{fig:robust1}
\end{figure}

\begin{figure}
    \centering
    \includegraphics[width=\textwidth]{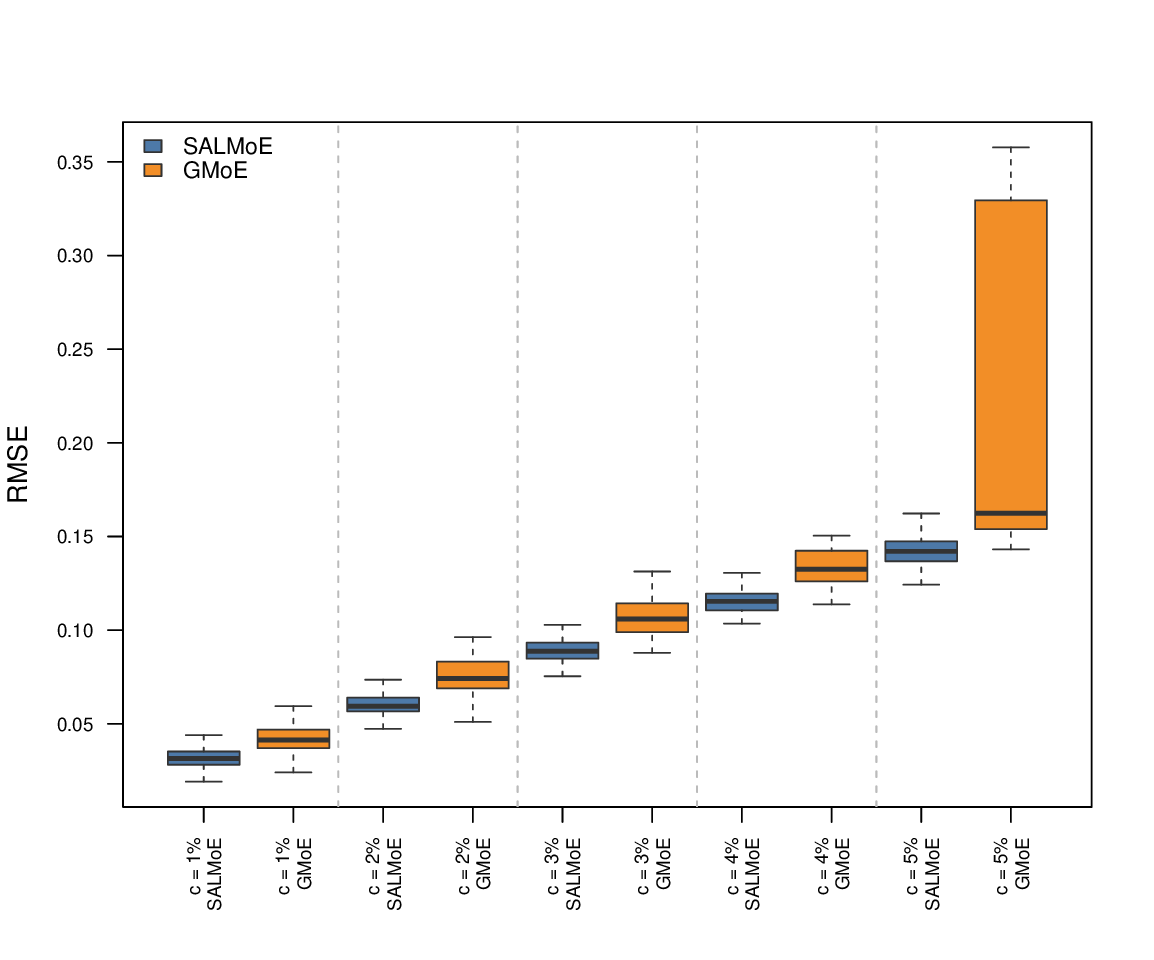}
    \caption{RMSE values for the SALMoE and GMoE models when the data are generated from the GMoE model with $c\%$ of the data substituted with noise data points and $c=1,2,3,4$ and $5$.}
    \label{fig:robust2}
\end{figure}
\subsubsection{Scenario 2}
In this scenario, we generate $100$ samples of size $n=500$ from a $K=2$ component MoE model in which each expert component follows a skew-normal (SN) distribution introduced by \citet{azzalini1985}. The density of the SN distribution is $\mathrm{SN}(y|\mu,\sigma,\lambda)$, where $\mu\in \mathbb{R}$, $\sigma\in\mathbb{R}^+$ and $\lambda\in\mathbb{R}$ are the location, scale and skewness parameters, respectively. The skewness parameter controls the shape of the distribution. The values $\lambda<0$, $\lambda=0$ and $\lambda>0$ correspond to the case of negative skewness, symmetric (normal) and positive skewness, respectively. We consider a variety of shapes specified by the skewness parameter values $\lambda=-20,-10,0,10$ and $20$. The remaining gating, regression and scale parameters are as given in Table \ref{tab:sim_params1}. The results of the experiment are given in Figure \ref{fig:skew}. The following three observations can be made from the figure: first, in general, the SALMoE model outperforms the GMoE model when the data exhibit asymmetric behaviour. Second, the outperformance of the SALMoE model is more pronounced when the data are positively skewed than when they are negatively skewed. Notably, the GMoE model is less stable when the data are positively skewed than when they are negatively skewed. Lastly, as expected, when $\lambda=0$, the GMoE model outperforms the SALMoE model. 

\begin{figure}
    \centering
    \includegraphics[width=\textwidth]{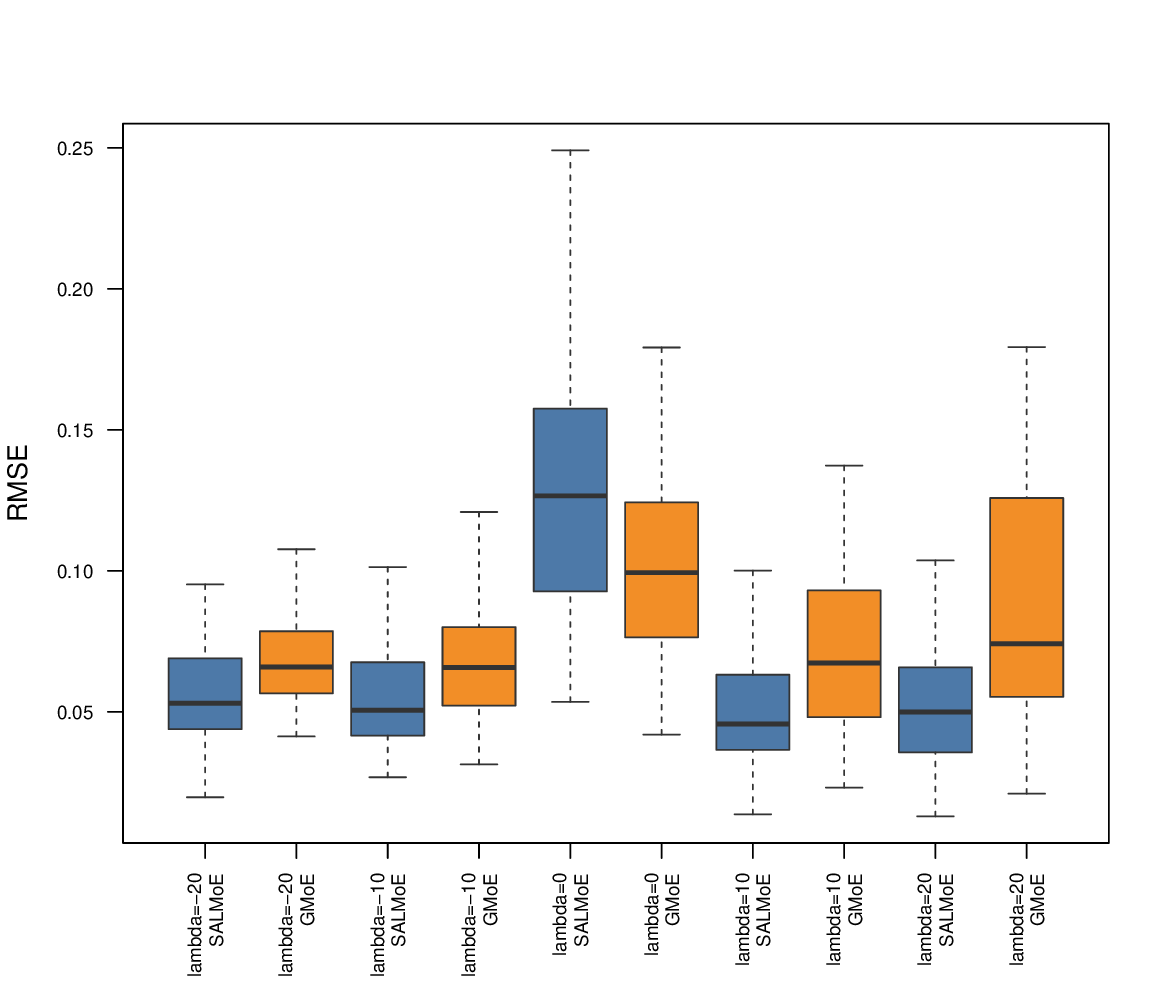}
    \caption{RMSE values for the SALMoE and GMoE models when the data are generated from the MoE model in which each expert component distribution follows a skew-normal distribution.}
    \label{fig:skew}
\end{figure}
\subsection{Model-based clustering performance}
In this experiment, we examine the clustering capability of the SALMoE model. We consider the $K=2$ component MoE model. We will compare the performance of the SALMoE model with the GMoE model under four scenarios: 
\begin{enumerate}
\item[(a)] the data are generated from a Gaussian distribution; 
\item[(b)] same as (a) but with $5\%$ of the data substituted with noisy data points; 
\item[(c)] the data are generated from a skew-normal (SN) distribution with skewness parameters $\lambda_1=\lambda_2=20$;
\item[(d)] same as (c) but with $5\%$ of the data substituted with noisy data points.
\end{enumerate}
Scenario (a) considers the case of symmetry with no outliers in the data. Scenario (b) considers the case of symmetric but heavy-tailed data. Scenario (c) considers the case of asymmetric data with lighter tails. Scenario (d) considers the case of asymmetric and heavy-tailed data.\\
In all these scenarios, the rest of the MoE parameter values are as given in Table \ref{tab:sim_params1}. To measure the performance of the models, we will make use of the misclassification error (ClassErr) and the adjusted Rand index (ARI) to assess the effectiveness of the clustering procedure. The results of this experiment are given in Figure \ref{fig:clus}.\\
From the figures, we can see that when the data are Gaussian, the GMoE model slightly outperforms the SALMoE model in terms of clustering performance. However, the clustering performance of the GMoE model deteriorates when we add noise and skewness to the data, second and third rows, respectively. The deterioration in the clustering performance of the GMoE model worsens in the case when we have both skewed and heavy-tailed data with outliers (last row). In contrast, in all four scenarios, the robustness of the SALMoE model can be clearly seen. This is especially true in Scenario (d).
\begin{figure}[p]
    \centering
    \begin{subfigure}[b]{0.35\textwidth}
        \centering
        \includegraphics[width=\textwidth]{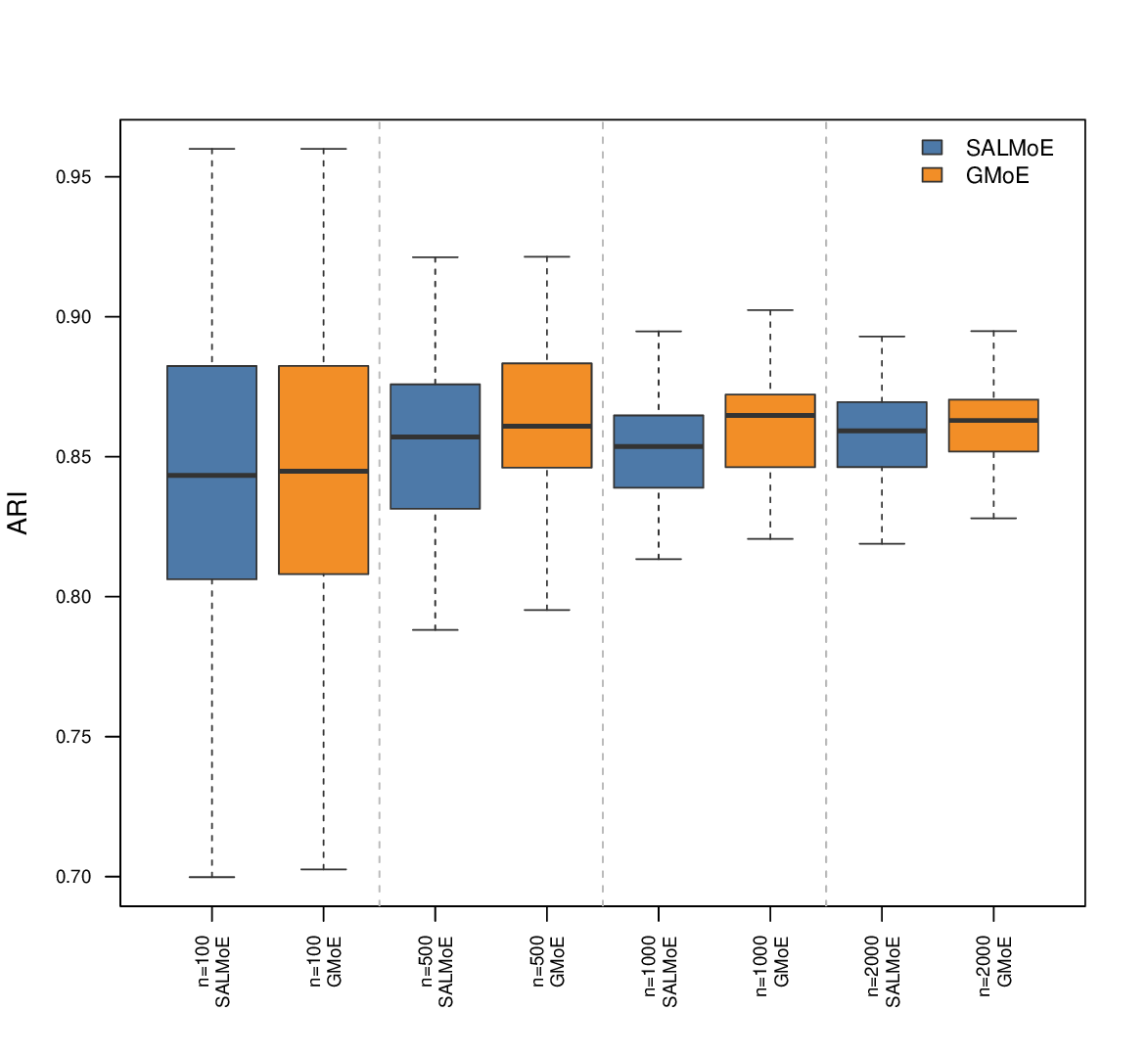}
        %\caption{Caption 1}
        \label{fig:fig1}
    \end{subfigure}
    \hfill
    \begin{subfigure}[b]{0.35\textwidth}
        \centering
        \includegraphics[width=\textwidth]{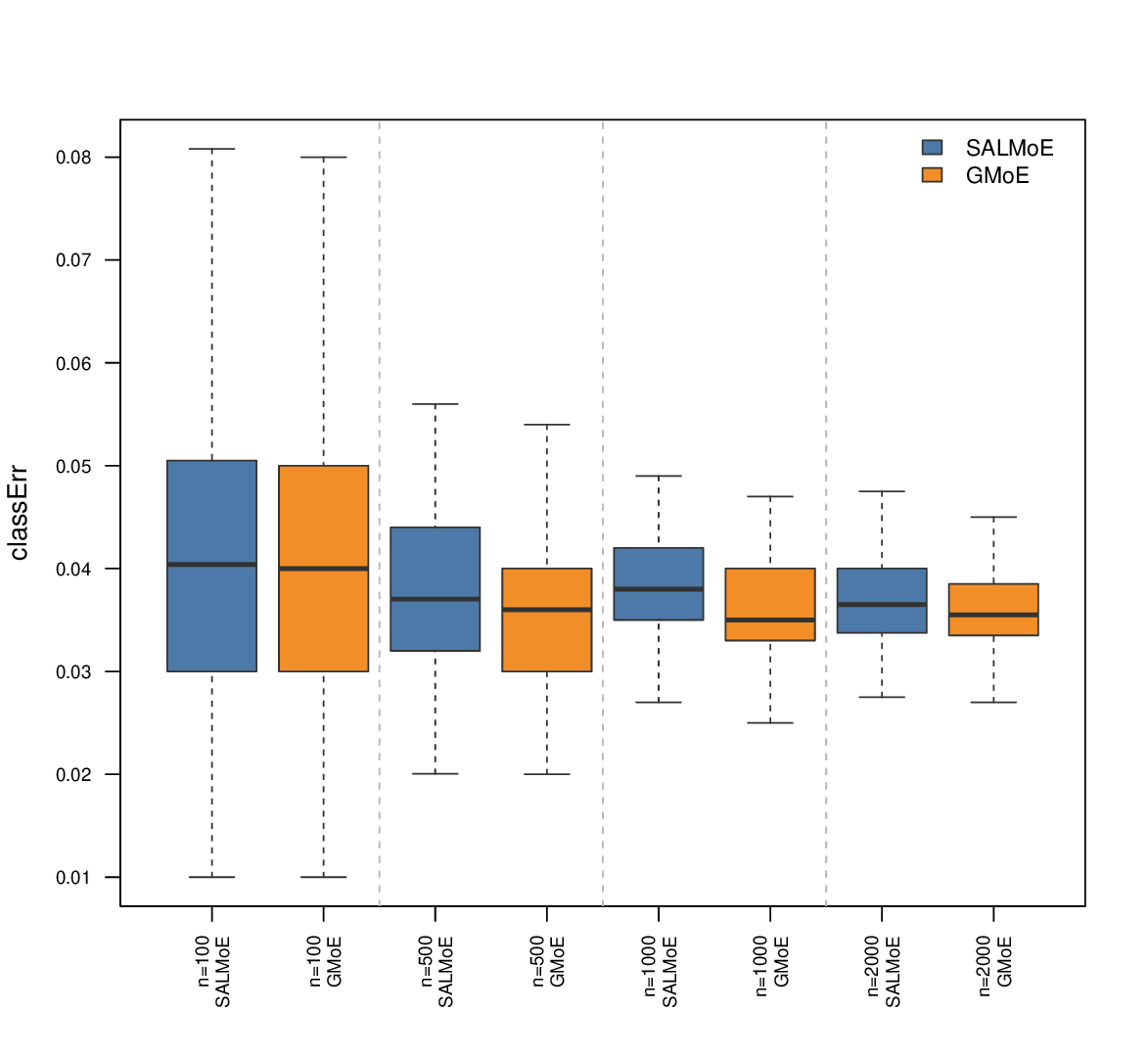}
        %\caption{Caption 2}
        \label{fig:fig2}
    \end{subfigure}

    %\vspace{0.2em}

    % Row 2
    \begin{subfigure}[b]{0.35\textwidth}
        \centering
        \includegraphics[width=\textwidth]{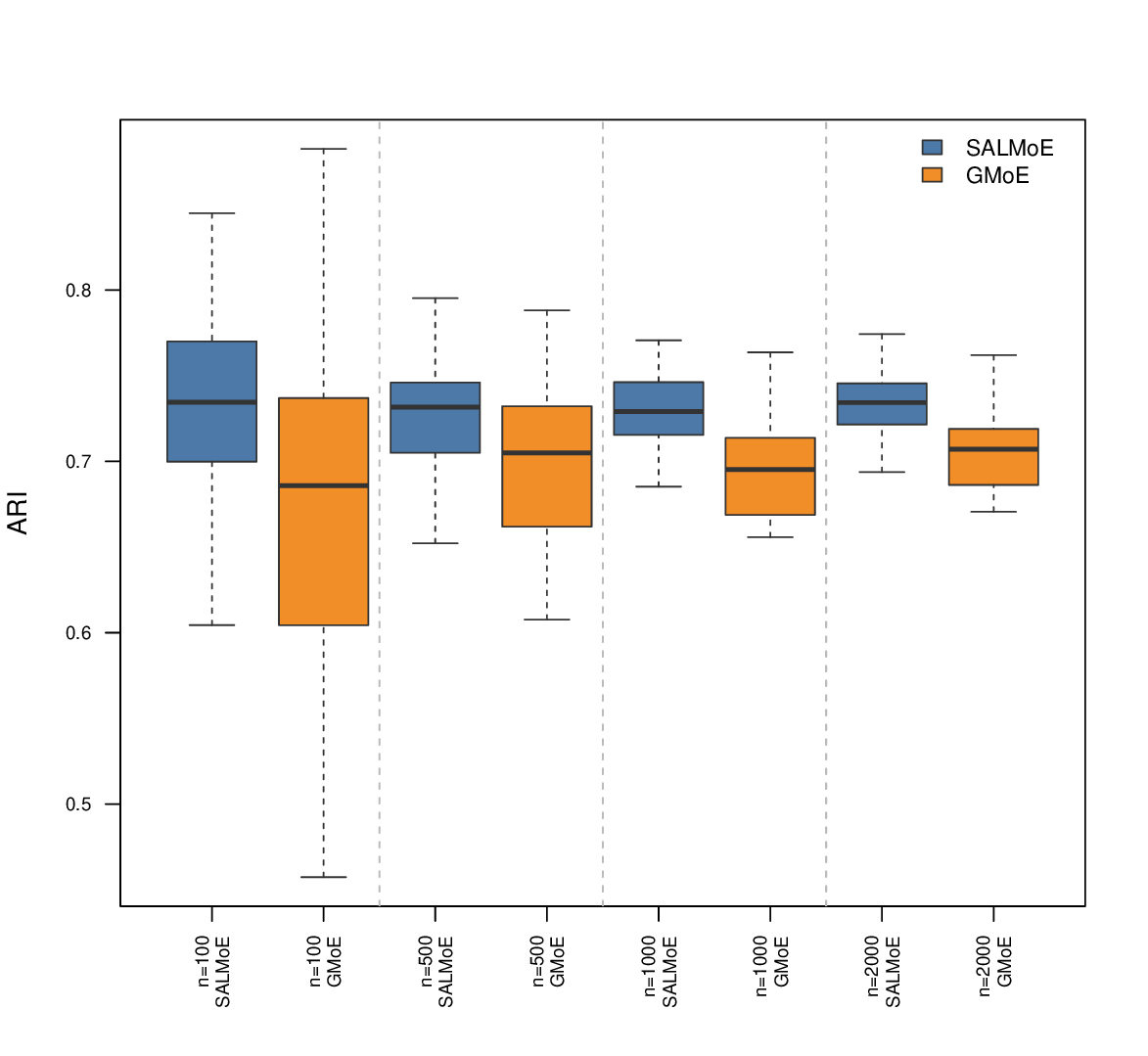}
        %\caption{Caption 3}
        \label{fig:fig3}
    \end{subfigure}
    \hfill
    \begin{subfigure}[b]{0.35\textwidth}
        \centering
        \includegraphics[width=\textwidth]{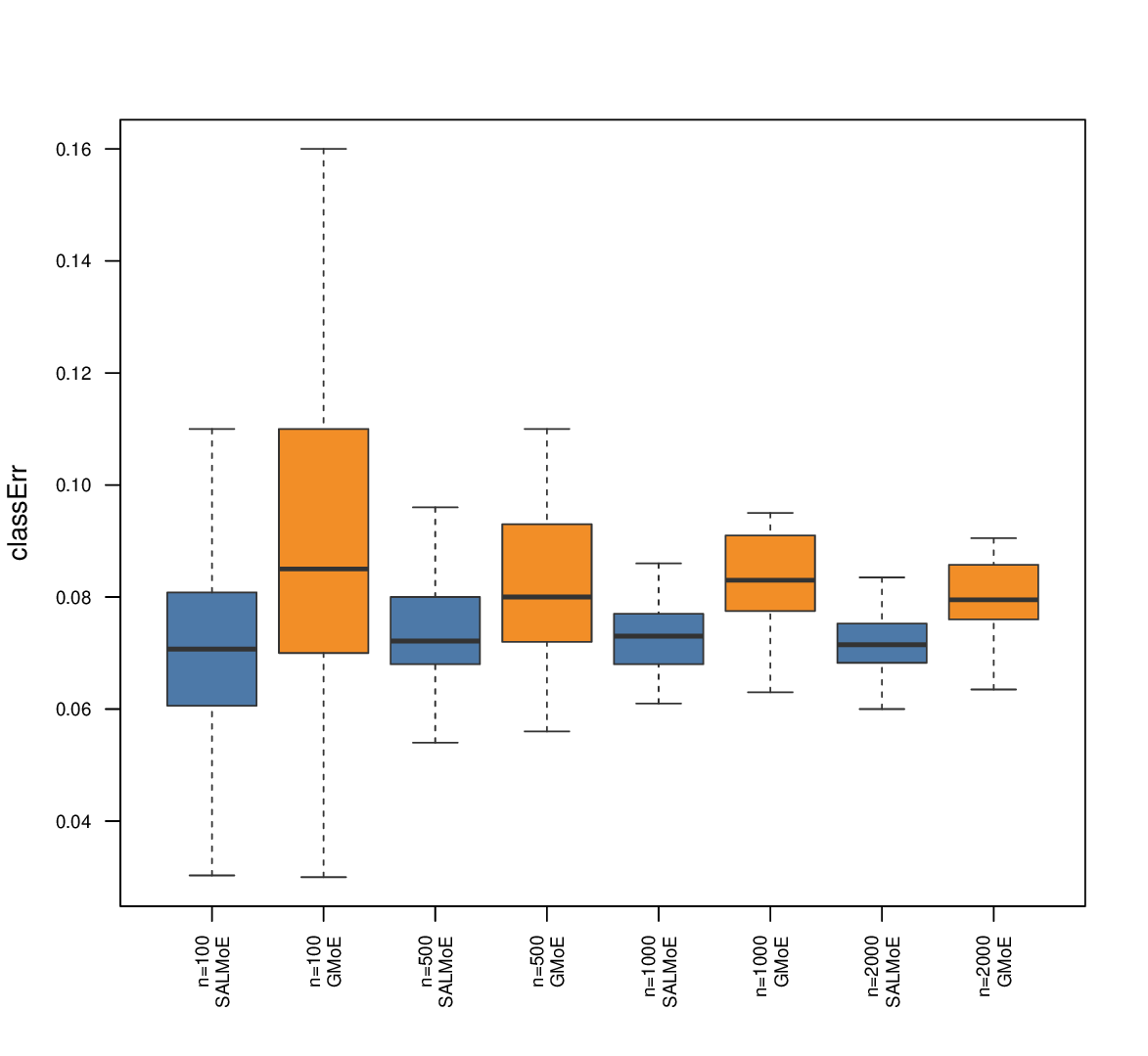}
        %\caption{Caption 4}
        \label{fig:fig4}
    \end{subfigure}

    %\vspace{0.2em}

    % Row 3
    \begin{subfigure}[b]{0.35\textwidth}
        \centering
        \includegraphics[width=\textwidth]{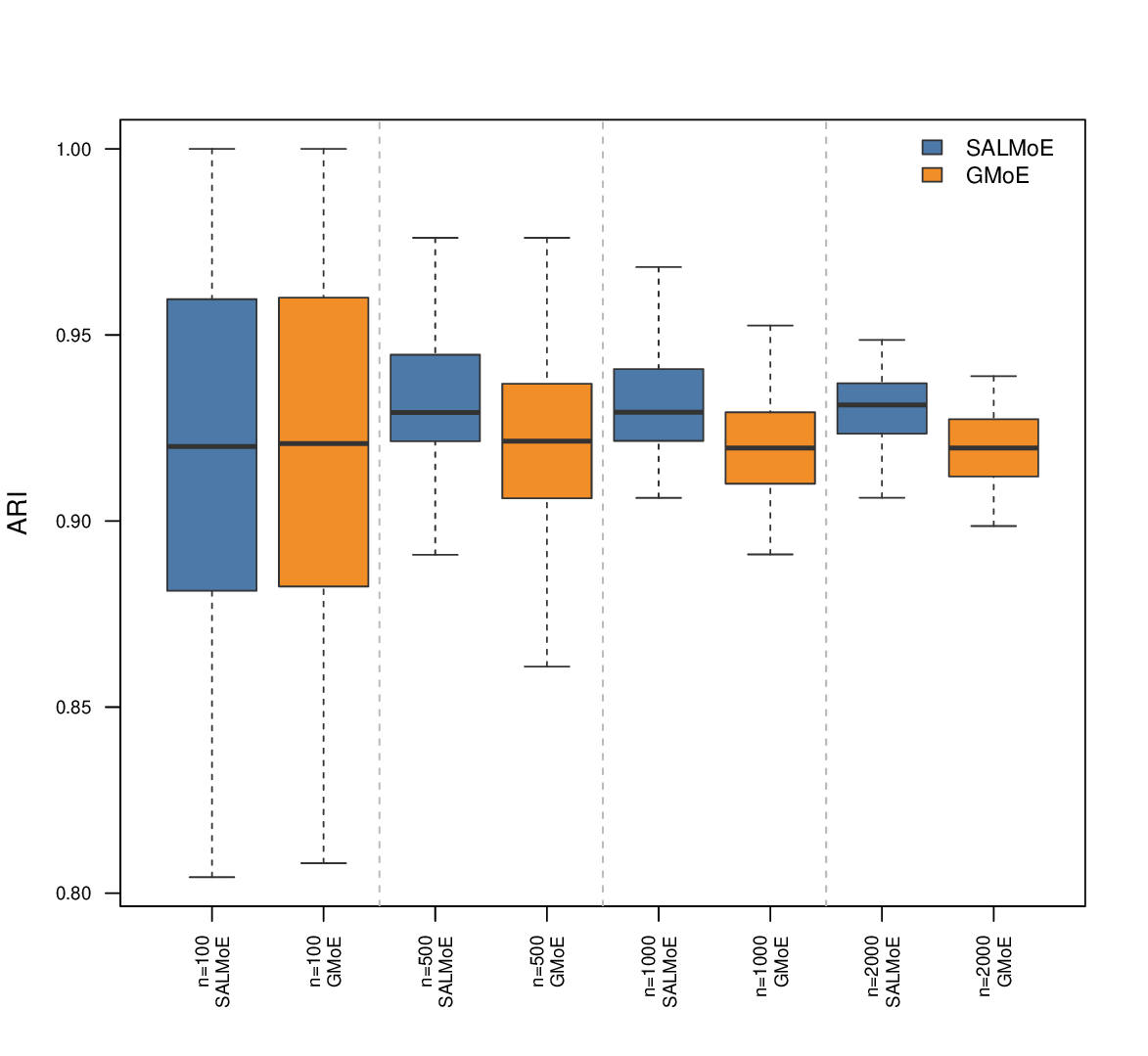}
        %\caption{Caption 5}
        \label{fig:fig5}
    \end{subfigure}
    \hfill
    \begin{subfigure}[b]{0.35\textwidth}
        \centering
        \includegraphics[width=\textwidth]{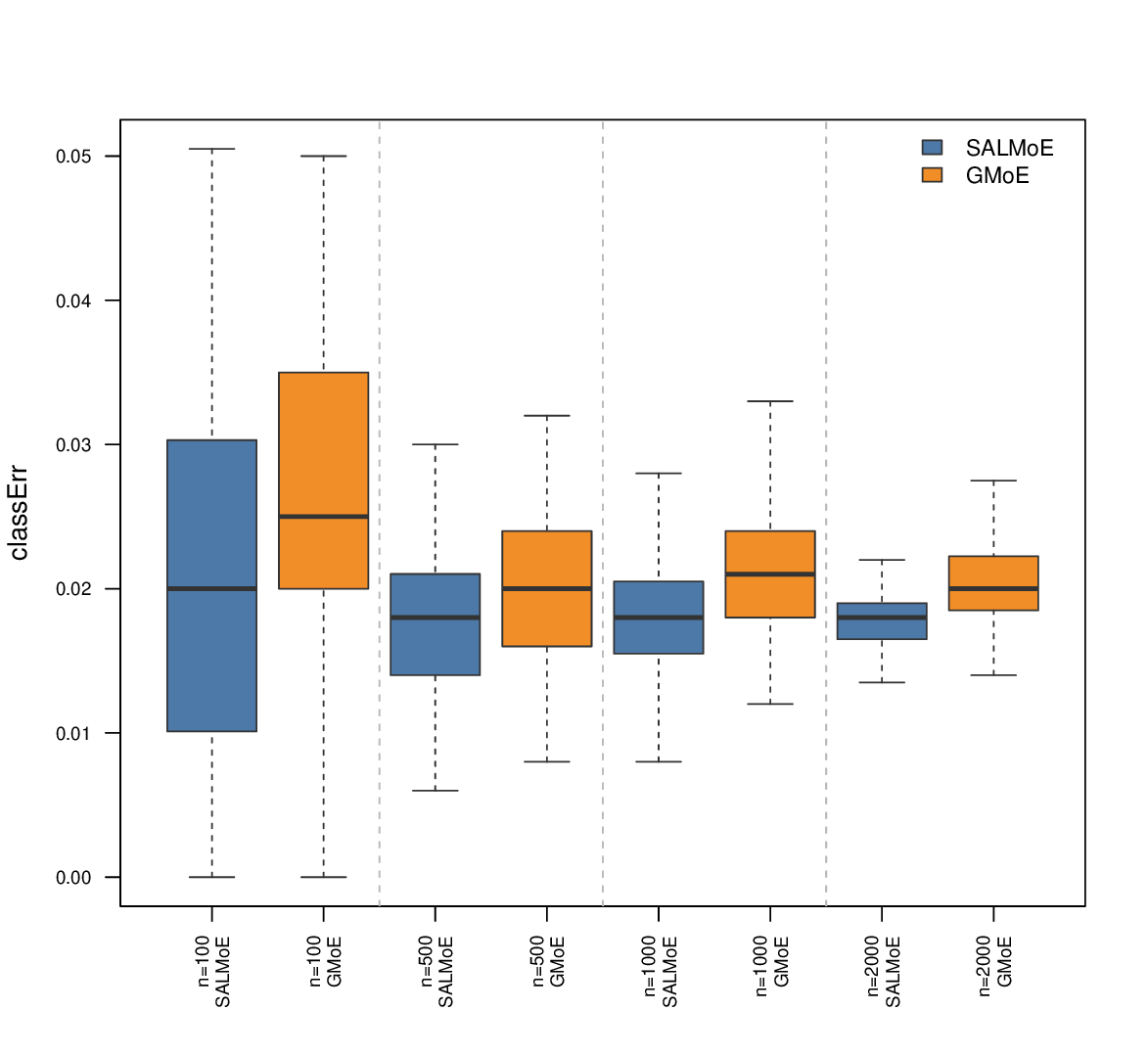}
        %\caption{Caption 6}
        \label{fig:fig6}
    \end{subfigure}

    %\vspace{0.2em}
    \begin{subfigure}[b]{0.35\textwidth}
        \centering
        \includegraphics[width=\textwidth]{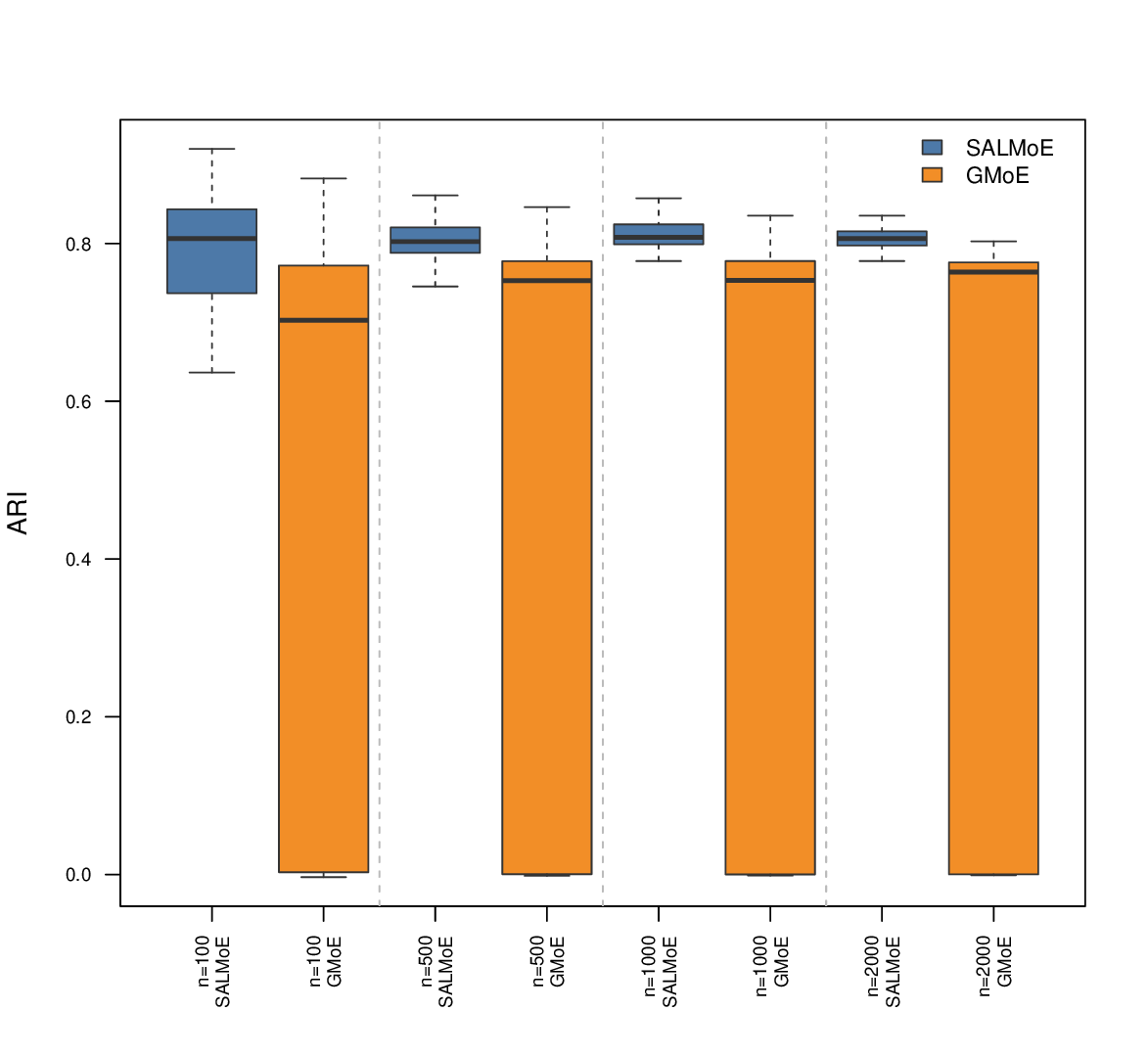}
        %\caption{Caption 7}
        \label{fig:fig7}
    \end{subfigure}
    \hfill
    \begin{subfigure}[b]{0.35\textwidth}
        \centering
        \includegraphics[width=\textwidth]{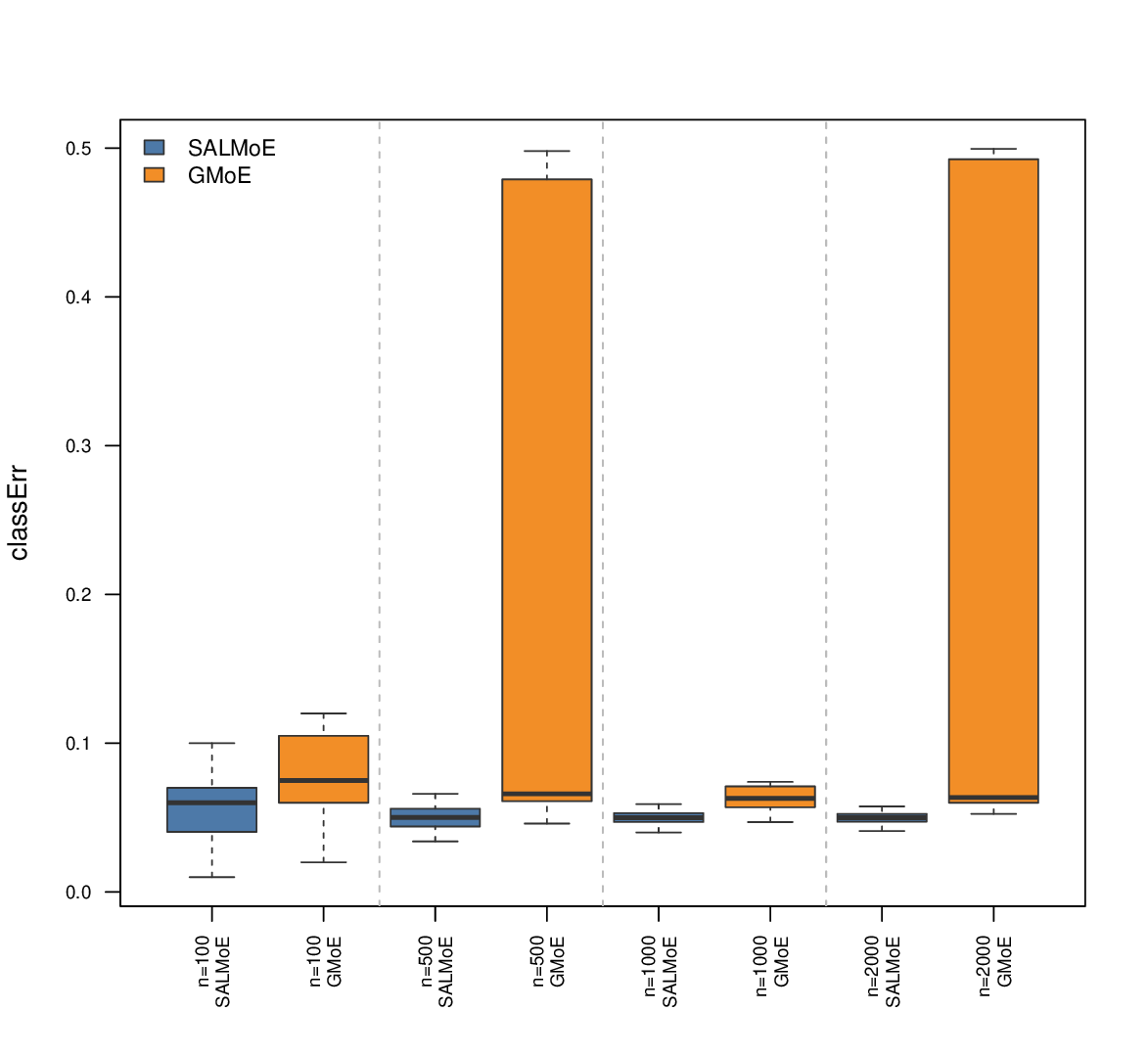}
        %\caption{Caption 8}
        \label{fig:fig8}
    \end{subfigure}
    \caption{Boxplot comparisons of the ARI (first column) and ClassErr (second column) calculated from the SALMoE (in blue) and GMoE (in orange) from Scenario (a) (first row) to Scenario (d) (last row).}
    \label{fig:clus}
\end{figure}
\subsection{Choosing the number of expert components}\label{choose_k}
To identify the suitable criterion for choosing the correct number of expert components for the proposed SALMoE model, we generate $100$ samples of size $n=500$ from an MoE model based on the following scenarios:
\begin{description}
    \item[S1:] $K=2$ Gaussian expert components with parameters given in Table \ref{tab:sim_params1}.
    \item[S2:] $K=2$ SAL expert components with parameters given in Table \ref{tab:sim_params1}.
    \item[S3:] $K=2$ skew-normal (SN) expert components with skewness parameters $\lambda_1=\lambda_2=20$ and the rest of the parameters are given in Table \ref{tab:sim_params1}.
    \item[S4:] $K=3$ Gaussian expert components with parameters $\boldsymbol{\eta}_1=\boldsymbol{\eta}_2=(0,10)$, $\boldsymbol\beta_1=(0,1)$, $\boldsymbol\beta_2=(0,-1)$, $\boldsymbol\beta_3=(1,-1)$ and $\sigma_1=\sigma_2=\sigma_3=0.1$.
    \item[S5:] $K=3$ SAL expert components with parameters $\boldsymbol{\eta}_1=\boldsymbol{\eta}_2=(0,10)$, $\boldsymbol\beta_1=(0,1)$, $\boldsymbol\beta_2=(0,-1)$, $\boldsymbol\beta_3=(1,-1)$, $\alpha_1=\alpha_2=\alpha_3=1$ and $\sigma_1=\sigma_2=\sigma_3=0.1$.
    \item[S6:] $K=3$ SN expert components with parameters $\boldsymbol{\eta}_1=\boldsymbol{\eta}_2=(0,10)$, $\boldsymbol\beta_1=(0,1)$, $\boldsymbol\beta_2=(0,-1)$, $\boldsymbol\beta_3=(1,-1)$, $\lambda_1=\lambda_2=\lambda_3=20$ and $\sigma_1=\sigma_2=\sigma_3=0.1$.
\end{description}
As already mentioned (see Section \ref{subsec4}), the value of $\alpha$ will be chosen using \eqref{pan_calibrate}. To ensure comparability between the PanIC, with orders $\beta\in\{1,2\}$, and the other ICs, we will use $\nu\in\{10^3,10^4\}$.\\
To each of the generated data, we fit a SALMoE model with $K=2,3,4$ and $5$ expert components. To evaluate the performance, we compute the average value of $K$, denoted by $\bar{K}$, obtained from each IC, as well as the number of times when $K=K^\star$, where $K^\star$ is the true value of $K$ used to generate the data. The results of this experiment are given in Table \ref{tab:sim_results}.\\
In Scenarios S1 and S3, all the ICs perform well and the results are comparable. In Scenario S2, when we have skewed and heavy-tailed data, the performance of the ICs is still good but not as good as in Scenarios S1 and S3. Moreover, the ICL performs slightly better than the other ICs (its $\bar{K}$ value is the closest one to the true value of $2$). This is also the case in Scenario S6. In Scenario S4, the ICL performs slightly better than the other ICs. In Scenario S5, when we have $K=3$ skewed and heavy-tailed expert components, the performance of all the ICs deteriorates (the ICs tend to choose the wrong number of expert components, i.e. $\#\{K=K^\star\}$ is small for all the ICs). However, the PanIC with $\alpha(1,10^3)$ has the closest $\bar{K}$ value to the true value of $3$, while the PanIC with $\alpha(2,10^4)$ gives the largest $\#\{K=K^\star\}$ value in this scenario.
\begin{table}[ht]
\centering
\caption{Results from 100 simulation runs for each scenario of the MoE models using sample size $n = 500$.}
\label{tab:sim_results}
\setlength{\tabcolsep}{4pt}
\renewcommand{\arraystretch}{0.9}
\small
\begin{tabular}{@{}lcl cccccc@{}}
\toprule
&&&&&\multicolumn{4}{c}{PanIC}\\
\cmidrule{6-9}
Scenario & $K^*$ & &BIC & ICL &
$\alpha(1, 10^3)$ & $\alpha(1, 10^4)$ &
$\alpha(2, 10^3)$ & $\alpha(2, 10^4)$ \\
\midrule
\multirow{2}{*}{S1}&\multirow{2}{*}{2}&$\bar{K}$&2&2&2&2.17&2&2.12 \\
&& \#\{$K$=$K^*$\}&100&100&100&89&100&91\\
\addlinespace
\multirow{2}{*}{S2}&\multirow{2}{*}{2}&$\bar{K}$&2.20&$\mathbf{2.17}$&2.20&2.31&2.20&2.27 \\
&&\#\{$K$=$K^*$\}&85&85&85&78&85&80 \\
\addlinespace
\multirow{2}{*}{S3}&\multirow{2}{*}{2}&$\bar{K}$&2&2&2&2.02&2&2.02\\
& & \#\{$K$=$K^*$\}&100&100&100&98&100&98\\
\addlinespace
\multirow{2}{*}{S4}&\multirow{2}{*}{3}&$\bar{K}$&3.40&$\mathbf{3.39}$&3.41&3.63&3.41&3.58\\
&&\#\{$K$=$K^*$\}&66&$\mathbf{67}$&66&54&66&56\\
\addlinespace
\multirow{2}{*}{S5}&\multirow{2}{*}{3} &$\bar{K}$&2.60&2.45&$\mathbf{2.86}$&3.44&2.83&3.38\\
&&\#\{$K$=$K^*$\}&32&32&38&41&37&$\mathbf{42}$\\
\addlinespace
\multirow{2}{*}{S6}&\multirow{2}{*}{3}&$\bar{K}$&3.16&$\mathbf{3.01}$&3.16&3.26&3.16&3.23\\
& & \#\{$K$=$K^*$\}&85&85&85&79&85&80\\
\bottomrule
\end{tabular}
 
\bigskip
\noindent\small Note: Bold values highlight that the corresponding IC has either the closest $\bar{K}$ to $K^*$ or the greatest \#\{$K$=$K^*$\} value for the indicated scenario.
\end{table}

\section{Application}\label{sec4}
In this section, we demonstrate the practical usefulness of the proposed methods on two real economic datasets.
\subsection{CO$_2$ data}
Our first dataset consists of CO$_2$ emissions per capita ($\mathrm{CO2pc}$) and GDP per capita ($\mathrm{GDPpc}$) for a group of 187 countries in 2024. The $\mathrm{CO2pc}$ data were obtained from \href{https://ourworldindata.org/co2-and-greenhouse-gas-emissions}{https://ourworldindata.org/co2-and-greenhouse-gas-emissions} and the $\mathrm{GDPpc}$ data were obtained from \href{https://ourworldindata.org/economic-growth}{https://ourworldindata.org/economic-growth}. Figure \ref{fig:hist_scatter_CO2data} shows a scatter plot of the data and histogram of $\mathrm{CO2pc}$. 
\begin{figure}[htb]
    \centering
    \begin{subfigure}[b]{0.45\textwidth}
        \centering
        \includegraphics[width=\textwidth]{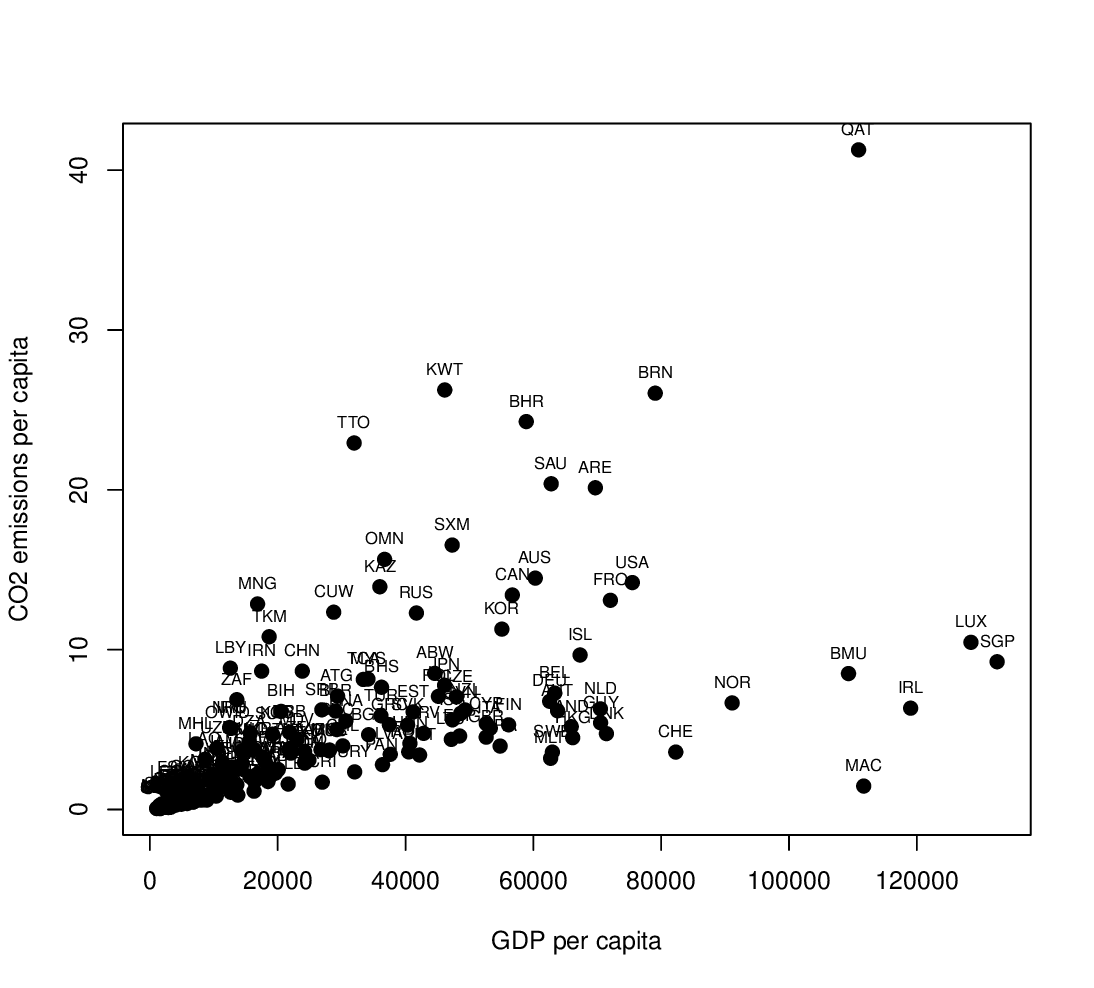}
        \caption{}
        \label{fig:co2_scatter}
    \end{subfigure}
    \hfill
    \begin{subfigure}[b]{0.45\textwidth}
        \centering
        \includegraphics[width=\textwidth]{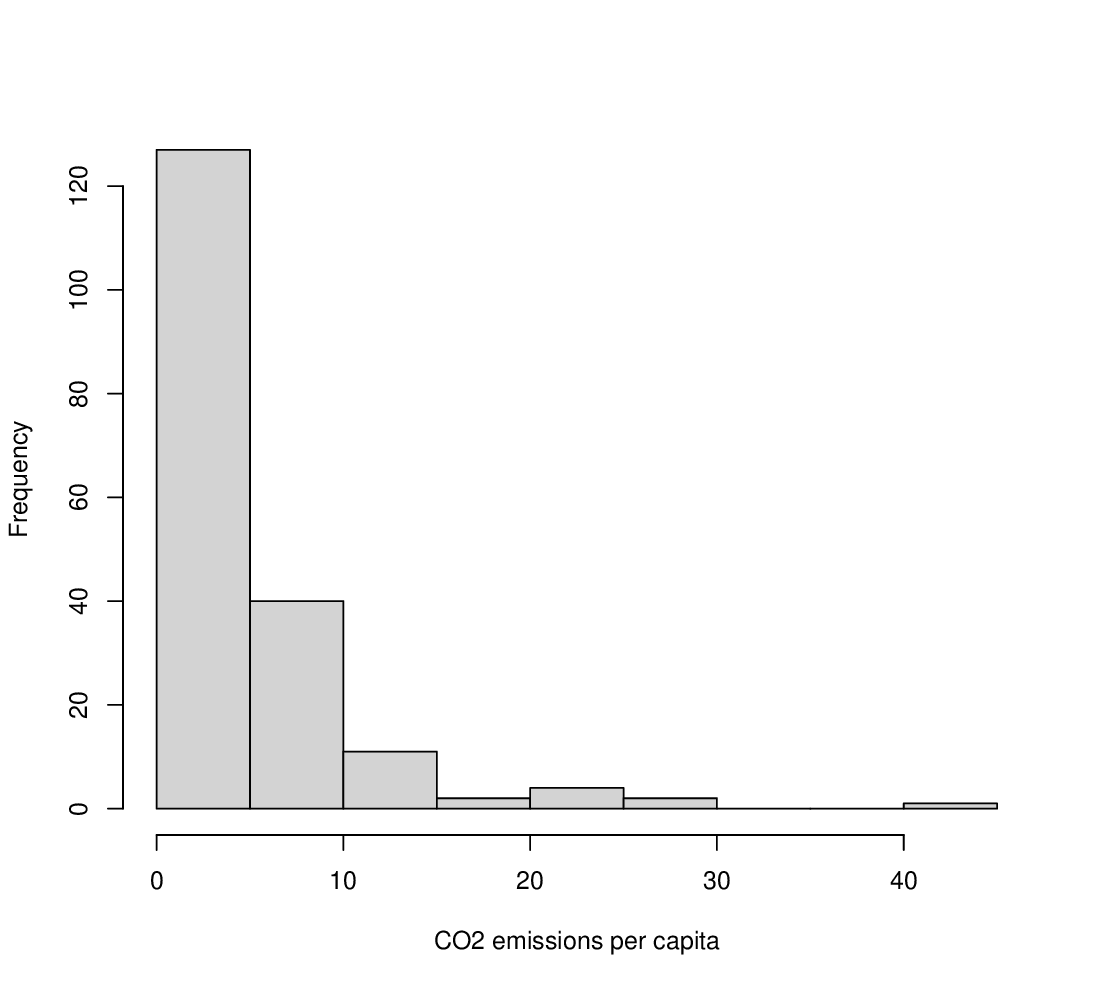}
        \caption{}
        \label{fig:co2_hist}
    \end{subfigure}
    \caption{CO$_2$ data: (a) Scatter plot of $\mathrm{CO2pc}$ and $\mathrm{GDPpc}$. (b) Histogram of $\mathrm{CO2pc}$}
    \label{fig:hist_scatter_CO2data}
\end{figure}
It can be seen from the histogram that the distribution of $\mathrm{CO2pc}$ is skewed. We are interested in the dependence of $\mathrm{CO2pc}$ on $\mathrm{GDPpc}$. To investigate the presence of multiple latent regression relationships (experts) in the data, we fit the SALMoE model with $K=1,2,\dots,5$. We compare the SALMoE model with the GMoE model and choose the best model based on the ICs. The variables are standardized before fitting the models.\\ 
Let $\mathbf{x}=\mathbf{t}=(1,\mathrm{GDPpc})$ be the design vector for the expert and gating functions. Table \ref{CO2data_IC} gives the IC values of all the fitted models. It can be seen from the table that the $K=2$ expert component SALMoE model gives the best fit for the data based on all the ICs. In Figure \ref{fig:fitted_mean}, we plot the obtained component mean functions and the overall mean function of the best model. We also show the data points that are assigned to each component, obtained using the MAP approach in Section \ref{subsec6}.  Table \ref{CO2data_param} gives the estimated parameters of the best model. Also included in the table are the 95\% Bootstrap confidence intervals obtained in a similar manner as in \cite{Skhosana2026}.

\begin{table}[ht]
\centering
\caption{CO$_2$ data: Information criteria values for the estimated SALMoE model and GMoE model for $K=1,2,\dots,5$.}\label{CO2data_IC}
\setlength{\tabcolsep}{9pt}
\renewcommand{\arraystretch}{1.2}
\begin{tabular}{cccccccc}
        \hline
        \multirow{2}{*}{$K$}& \multicolumn{3}{c}{SALMoE} && \multicolumn{3}{c}{GMoE} \\
        \cmidrule{2-4}\cmidrule{5-8}
        & BIC & ICL & PanIC && BIC& ICL&PanIC \\
        \hline
        1&272.4465&272.4465&259.8493&&304.7228&304.7228&295.2749 \\
        \hline
        2 &$\mathbf{220.1228}$&$\mathbf{249.5628}$ &$\mathbf{188.6298}$ &&306.0764&306.038&280.843 \\
        \hline
        3 &243.3229&295.5258 &192.9341&&314.5188&314.5196&273.578 \\
        \hline
        4 &269.0602&335.1393&199.7756 &&330.7039&330.6989&274.0165 \\
        \hline
        5 &294.3384&397.1369&206.158&& 358.7872&358.8232&286.3533 \\
        \hline
    \end{tabular}
\par\smallskip\noindent\small Note: The PanIC was calculated using the value of $\alpha=\alpha(\beta,\nu)$, where $\beta=1$ and $\nu=1000$.
\end{table}

\begin{figure}[htbp]
    \centering
    \begin{subfigure}[b]{0.45\textwidth}
        \centering
        \includegraphics[width=\textwidth]{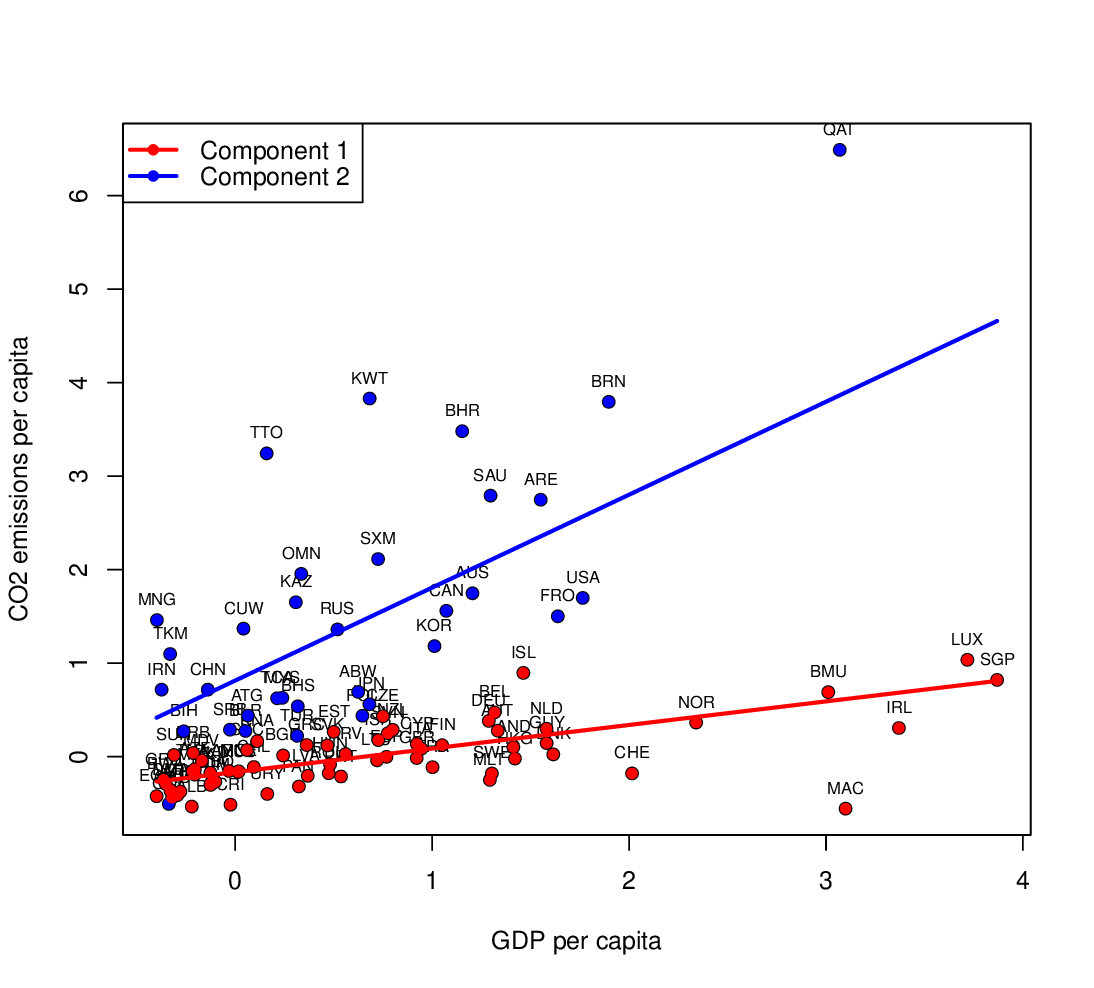}
        \caption{}
        \label{fig:co2_components}
    \end{subfigure}
    \hfill
    \begin{subfigure}[b]{0.45\textwidth}
        \centering
        \includegraphics[width=\textwidth]{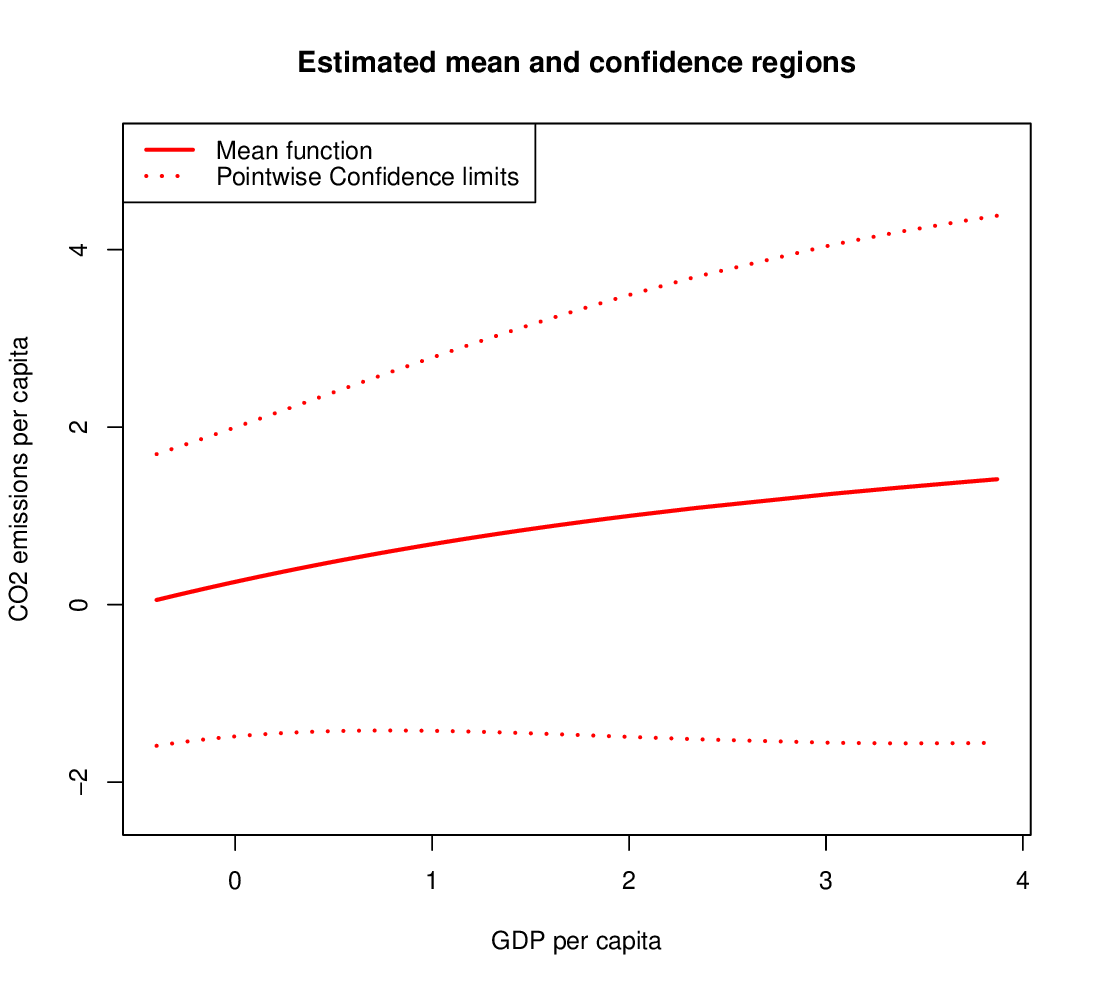}
        \caption{}
        \label{fig:co2_overall_mean}
    \end{subfigure}
    \caption{CO$_2$ data: (a) Fitted component regression functions and (b) the overall mean function (with the pointwise predictive limits)}
    \label{fig:fitted_mean}
\end{figure}
\begin{table}[ht]
\centering
\caption{CO$_2$ data: Parameter estimates of the fitted $K=2$ component SALMoE model and the $95\%$ Bootstrap confidence intervals}\label{CO2data_param}
\setlength{\tabcolsep}{6pt}
\renewcommand{\arraystretch}{1.2}
\begin{tabular}{lccccccccccc}
\hline
&$\beta_{10}$&$\beta_{11}$&$\beta_{20}$&$\beta_{21}$&$\sigma_1$&$\sigma_2$&$\alpha_1$&$\alpha_2$&$\eta_{10}$&$\eta_{11}$ \\
\hline
&$-0.1542$&$0.2517$&$-0.1762$&$1.0233$&$0.0864$&$0.1100$&$-0.008$&$0.9887$&$0.3086$&$0.3533$\\
\hline
95\% CI Lower&-0.2106&0.2057&-0.2310&0.9458&0.0425&0.0141&-0.1145&0.7403&-0.2147&0.0026\\
95\% CI Upper&-0.0924&0.2987&-0.0933&1.1440&0.1318&0.2336&0.1083&1.2525&0.8097&0.8302\\
\hline
\end{tabular}
\end{table}
For countries in component 2, the increase in $\mathrm{CO2pc}$, following an increase in $\mathrm{GDPpc}$, is greater than that in component 1. Thus, countries in component 2 have a high emission profile compared to countries in component 1. This may reflect greater reliance on carbon-emitting fossil fuels among countries in this fitted component. This group includes the United States (USA) and China (CHN), which have high emission profiles under the fitted model. In contrast, component 1 includes countries such as Norway (NOR) and Sweden (SWE), which appear in the lower-emission component. Figure \ref{fig:mixing_comp1} plots the fitted mixing proportion function for component 1. It can be seen from the figure that as GDP per capita increases, a given country is more likely to belong to component 1. This can be interpreted as saying that at higher levels of growth per capita (increased $\mathrm{GDPpc}$), countries move towards information-intensive economies that prioritise sustainability by leveraging technological advances and enforcing environmental regulations \cite{dinda2004}. Incidentally, these results are in line with the well-known environmental Kuznets curve (EKC) hypothesis (see \cite{dinda2004} for more details).\\
Finally, Figure \ref{fig:loglik_EM} shows the observed log-likelihood values at each iteration of the hybrid EM-MM algorithm for fitting the $K=2$ SALMoE model.
\begin{figure}[htbp]
    \centering
    \begin{subfigure}[b]{0.45\textwidth}
        \centering
        \includegraphics[width=\textwidth]{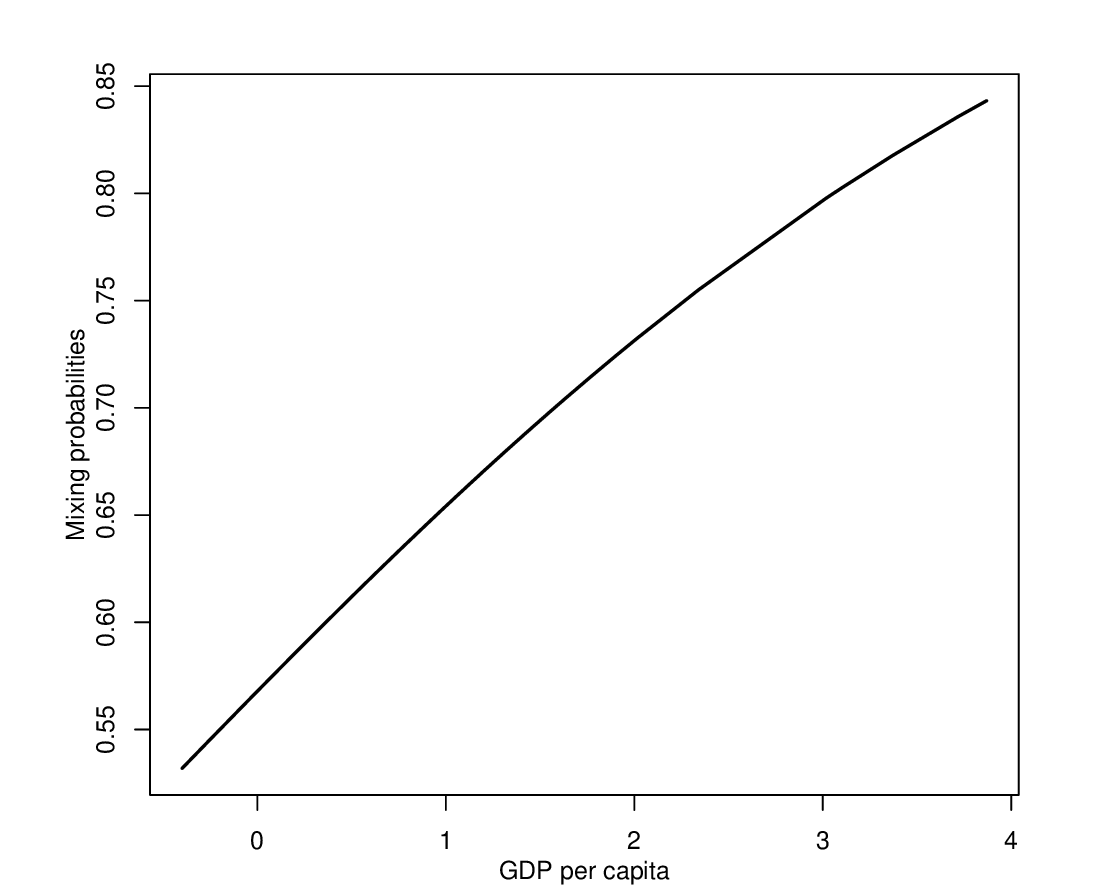}
        \caption{}
        \label{fig:mixing_comp1}
    \end{subfigure}
    \hfill
    \begin{subfigure}[b]{0.45\textwidth}
        \centering
        \includegraphics[width=\textwidth]{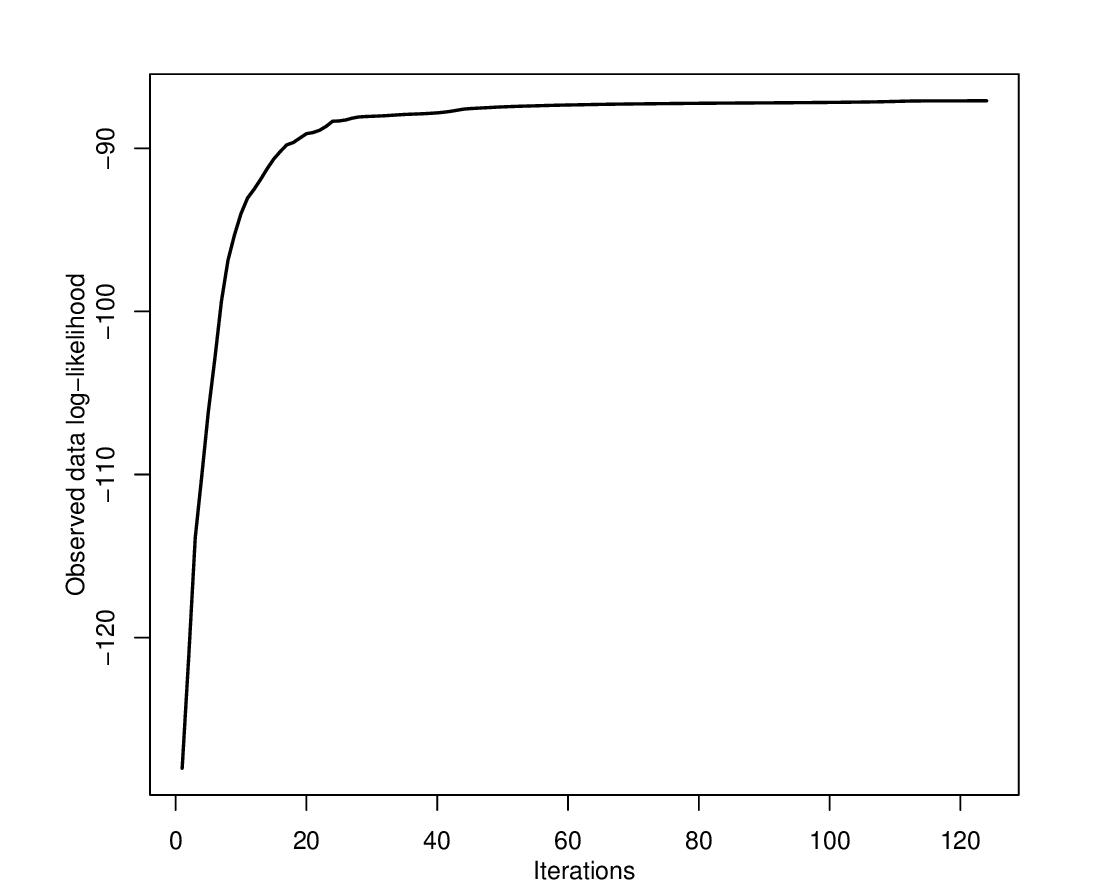}
        \caption{}
        \label{fig:loglik_EM}
    \end{subfigure}
    \caption{CO$_2$ data: (a) Fitted mixing probability function for component 1 and (b) the observed data log-likelihood values at each iteration of the hybrid EM-MM algorithm.}
    \label{fig:fitted_gating}
\end{figure}
\subsection{GDP growth data}
The second dataset consists of economic data from 88 countries averaged over a five-year period 1960--1964. The data were extracted from the cross-country GDP growth panel data for the period 1960--1995 which can be found in the \texttt{R} package \texttt{np} \cite{hayfield2008}. The data comprise the growth rate of real gross domestic product (GDP) per capita ($\mathrm{growth}$), per capita real GDP at the beginning of the five-year period ($\mathrm{initgdp}$), average annual population growth for the five-year period ($\mathrm{popgro}$), average investment-to-GDP ratio for the five-year period ($\mathrm{inv}$), average enrolment rate in secondary schools for the five-year period ($\mathrm{humancap}$) and a categorical variable ($\mathrm{oecd}$) with $1$ for OECD countries and $0$ otherwise.\\
We are interested in the dependence of $\mathrm{growth}$ on $\mathrm{initgdp}$, $\mathrm{popgro}$, $\mathrm{inv}$ and $\mathrm{humancap}$.

\begin{figure}[htbp]
	\centering
	\includegraphics[width=\textwidth]{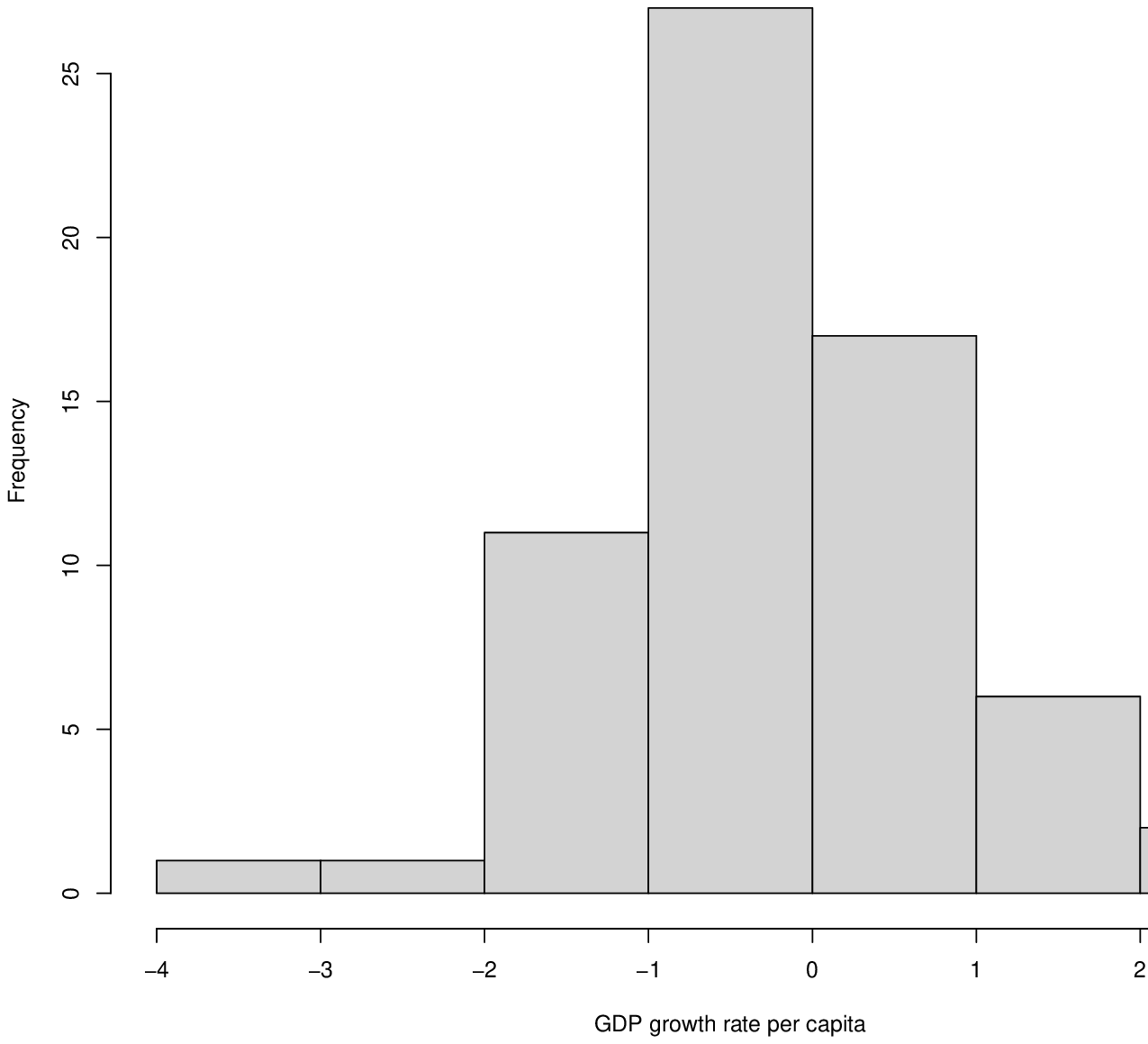}
	\caption{GDP growth data: Histogram of per capita GDP growth rate for non-OECD countries (left-panel) and OECD countries (right-panel)}
	\label{fig:growth_hist}
\end{figure}

%\begin{figure}[htbp]
	%\centering
	%\begin{subfigure}[b]{0.45\textwidth}
	%	\centering
	%	\includegraphics[height=5cm]{OECD}
	%	\caption{}
	%	\label{fig:growth_hist_OECD}
	%\end{subfigure}
	%\hfill
	%\begin{subfigure}[b]{0.45\textwidth}
	%	\centering
%		\includegraphics[height=5cm]{nonOECD}
%		\caption{}
%		\label{fig:growth_hist_non_OECD}
%	\end{subfigure}
%    \caption{GDP growth data: Histogram of per capita GDP growth rate for (a) OECD countries and (b) non-OECD countries}
%\label{fig:growth_hist}
%\end{figure}

Figure \ref{fig:growth_hist} plots the histograms of the $\mathrm{growth}$ variable for non-OECD countries (left-panel) and OECD countries (right-panel). Notice that the distribution of $\mathrm{growth}$ is different between OECD and non-OECD countries. Thus, we will use the $\mathrm{oecd}$ variable as an observed reference partition for the two groups. We will then fit the $K=2$ component SALMoE and GMoE models to the combined data to see whether we can recover the two groups. We standardize all the variables before fitting the models.\\
Let $\mathbf{x}=\mathbf{t}=(1,\mathrm{initgdp},\mathrm{popgro},\mathrm{inv},\mathrm{humancap})$ be the design vector of the experts and gating functions.
Table \ref{Growthdata_IC} gives the IC values calculated for the $K=2$ component SALMoE model and GMoE model. It can be seen from the table that the SALMoE model is the best $K=2$ component model for this dataset. Table \ref{tab:Growthdata_param} gives the estimated parameters of the best model. It can be seen that the impact of the covariates on the response variable differs widely between the two fitted components. For instance, for non-OECD countries (component 1), the influence of schooling (i.e. $\mathrm{humancap}$) has a positive and large impact on the growth in GDP of a country. In contrast, for OECD countries (component 2), $\mathrm{humancap}$ has a negligible impact on the growth in GDP of a country. Moreover, this impact is statistically insignificant (the CI includes a zero).

\begin{table}[hb]
\centering
\caption{GDP growth data: Information criteria values for the estimated SALMoE model and GMoE model for $K=2$.}\label{Growthdata_IC}
\setlength{\tabcolsep}{9pt}
\renewcommand{\arraystretch}{1.2}
\begin{tabular}{cccccccc}
    \hline
    \multirow{2}{*}{}& \multicolumn{3}{c}{SALMoE} && \multicolumn{3}{c}{GMoE} \\
    \cmidrule{2-4}\cmidrule{5-8}
    & BIC & ICL & PanIC && BIC& ICL&PanIC \\
    \hline
    &257.5521&271.6111&197.7184&&274.9852&289.6982&221.4498 \\
    \hline
    \end{tabular}
\par\smallskip\noindent\small Note: The PanIC was calculated using the value of $\alpha=\alpha(\beta,\nu)$, where $\beta=1$ and $\nu=1000$.
\end{table}

\begin{table}[htbp]
\caption{Estimated model parameters for the GDP growth rate data based on the best model (i.e. $K=2$ SALMoE model) and the 95\% Bootstrap confidence intervals}
\label{tab:Growthdata_param}
\setlength{\tabcolsep}{9pt}
\renewcommand{\arraystretch}{1.2}
\centering
\begin{tabular}{||c|c|c|c||c|c|c|c||}
\hline
Parameter & Estimate&\multicolumn{2}{c}{95\% CI} & Parameter & Estimate&\multicolumn{2}{c}{95\% CI}\\
\hline
&&Lower&Upper&&&Lower&Upper\\
\hline
$\beta_{10}$ & $-0.4962$&$-0.6993$&$-0.2902$& $\sigma_1$    & $0.0258$&$0.0002$&$0.0428$  \\
$\beta_{11}$ & $-1.7841$&$-1.9326$ &$-1.6220$&$\sigma_2$    & $0.6018$&$0.2728$&$0.8679$  \\
$\beta_{12}$ & $-0.2304$&$-0.3086$&$-0.1406$& $\alpha_1$    & $-0.0512$&$-0.1822$&$0.0766$ \\
$\beta_{13}$ & $ 1.0742$&$0.9469$&$1.2338$& $\alpha_2$    & $0.3073$ &$0.0161$&$0.5905$ \\
$\beta_{14}$ & $ 1.8729$&$1.5393$&$2.1548$& $\eta_{10}$   & $-2.4290$&$-4.7303$&$-1.4591$ \\
$\beta_{20}$ & $-0.3361$&$-0.5478$ &$-0.0992$&  $\eta_{11}$   & $-0.8305$&$-2.7791$&$0.3456$ \\
$\beta_{21}$ & $ 0.4393$&$0.1214$&$0.7279$& $\eta_{12}$   & $-0.2287$&$-1.4652$&$0.6518$ \\
$\beta_{22}$ & $-0.1582$&$-0.3531$&$0.0355$& $\eta_{13}$   & $-1.6261$&$-3.9352$&$-0.1125$ \\
 $\beta_{23}$ & $-0.0694$&$-0.3032$&$0.1621$ & $\eta_{14}$     & $4.0168$&$1.9662$&$8.4177$  \\
$\beta_{24}$ & $-0.0440$ &$-0.3020$&$0.2683$&&&&\\
\hline
\end{tabular}
\end{table}
Let $z_{oecd}$ be an indicator variable taking a value of $1$ if a country is an OECD country and $0$ otherwise. Let $\hat z_{oecd}$ be the estimated label of a given country obtained using the MAP approach in Section \ref{subsec6}. We calculated the classification accuracy, $n^{-1}\sum^n_{i=1}\mathbb{I}(z_{i,oecd}=\hat{z}_{i,oecd})$, and found that about $74\%$ of the countries were classified correctly as OECD or non-OECD countries; equivalently, the normalised class error is about $26\%$. Finally, Figure \ref{fig:Log_lik_Growthdata} shows the log-likelihood values at each iteration of the hybrid EM-MM algorithm. It can be seen that the log-likelihood values are increasing at each iteration.
\begin{figure}[htbp]
    \centering
    \includegraphics[width=0.7\textwidth]{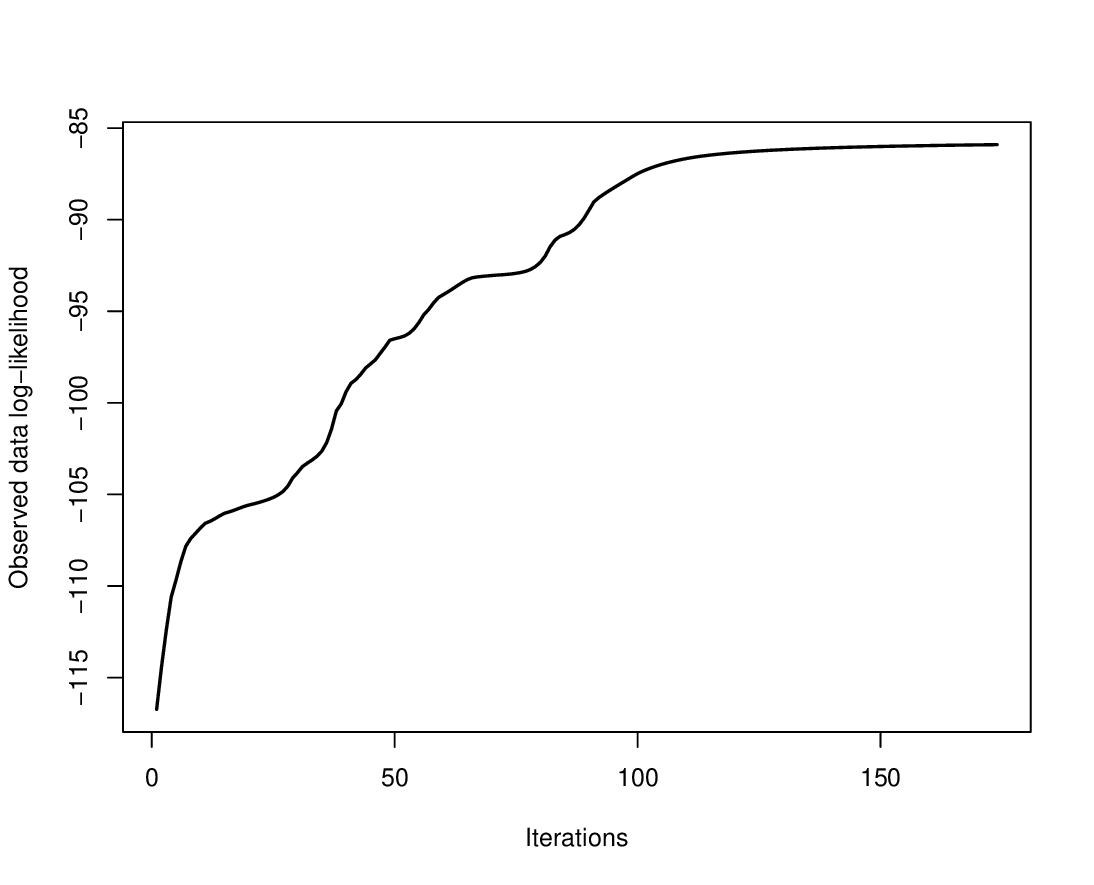}
    \caption{GDP growth data: The observed data log-likelihood values at each iteration of the hybrid EM-MM algorithm.}
    \label{fig:Log_lik_Growthdata}
\end{figure}
\section{Conclusion}\label{sec13}\label{sec5}
This paper proposes a new mixture of experts model for non-Gaussian data (i.e. skewed and possibly heavy-tailed data with outliers) using the shifted asymmetric Laplace distribution for the expert distributions. The model is termed the SALMoE model. To fit the model, we develop a hybrid algorithm combining the classical Expectation-Maximization (EM) algorithm and the minorization-maximization (MM) algorithm for estimating the parameters of the gating functions. A theoretical analysis of the convergence of the hybrid EM-MM algorithm was conducted. An intensive simulation study showed the robustness of the proposed SALMoE model under different scenarios involving skewed and heavy-tailed data. To choose the number of expert components for the SALMoE model, we considered the BIC, ICL and PanIC. Based on a numerical study that considered a variety of scenarios, the three approaches provided comparable results. To demonstrate the practical usefulness of the proposed methods, we applied them to two real datasets.\\
In this paper, we assumed that the gating functions and the component regression functions are parametric functions of the covariates. However, in practice, the parametric assumption may not hold. A natural future extension of this work is to consider nonparametric gating functions and component regression functions. Another possible extension of the current work is to consider the multivariate response variable case.
\backmatter

\appendix

\section{Appendix}
\subsection{Derivation of the MM update for the gating parameters}\label{MMstep}
For $h=1,\dots,K-1$, a standard calculation gives 
\begin{equation}
\frac{\partial}{\partial\boldsymbol{\eta}_{h}}Q(\tilde{\boldsymbol{\eta}}\mid\boldsymbol{\eta}^{(old)})=\sum_{i=1}^{n}\left\{ \gamma_{ih}^{(new)}-\pi_{h}(\mathbf{t}_{i}\mid\boldsymbol{\eta})\right\} \mathbf{t}_{i}.\label{grad_blocks}
\end{equation}
Stacking the blocks yields the score vector 
\begin{equation}
\boldsymbol{g}(\tilde{\boldsymbol{\eta}})=\frac{\partial}{\partial\tilde{\boldsymbol{\eta}}}Q(\tilde{\boldsymbol{\eta}}\mid\boldsymbol{\eta}^{(old)})=\sum_{i=1}^{n}\left\{ \boldsymbol{\gamma}_{i}-\mathbf{p}_{i}(\tilde{\boldsymbol{\eta}})\right\} \otimes\mathbf{t}_{i}.\label{score_vec}
\end{equation}
Moreover, the Hessian has the Kronecker form 
\begin{equation}
\mathbf{H}(\tilde{\boldsymbol{\eta}})=\frac{\partial^{2}}{\partial\tilde{\boldsymbol{\eta}}\,\partial\tilde{\boldsymbol{\eta}}^{\top}}Q(\tilde{\boldsymbol{\eta}}\mid\boldsymbol{\eta}^{(old)})=-\sum_{i=1}^{n}\left\{ \mathbf{A}_{\mathbf{p}_{i}(\tilde{\boldsymbol{\eta}})}-\mathbf{p}_{i}(\tilde{\boldsymbol{\eta}})\mathbf{p}_{i}(\tilde{\boldsymbol{\eta}})^{\top}\right\} \otimes(\mathbf{t}_{i}\mathbf{t}_{i}^{\top}),\label{hessian_form}
\end{equation}
where $\mathbf{A}_{\mathbf{p}}=\operatorname{diag}(\mathbf{p})$.

\noindent Let $\mathbf{1}\in\mathbb{R}^{K-1}$ be the vector of ones
and define the symmetric matrix 
\begin{equation}
\mathbf{C}=\mathbf{I}_{K-1}-\frac{1}{K}\mathbf{1}\mathbf{1}^{\top}.\label{Cmat_def}
\end{equation}
The multinomial logistic regression bound of \citet{Bohning1992MultinomialLogitAlgorithm}
implies that, for every $\mathbf{p}\in(0,1)^{K-1}$ arising from a
$K$-category multinomial logit model with a baseline category, 
\begin{equation}
\mathbf{A}_{\mathbf{p}}-\mathbf{p}\mathbf{p}^{\top}\preceq\frac{1}{2}\mathbf{C},\label{bohning_cov_bound}
\end{equation}
in the Loewner ordering. Combining (\ref{hessian_form}) and (\ref{bohning_cov_bound})
and using the property that $\mathbf{M}\preceq\mathbf{N}$ and $\mathbf{S}\succeq\mathbf{0}$
imply $\mathbf{M}\otimes\mathbf{S}\preceq\mathbf{N}\otimes\mathbf{S}$,
we obtain the global Hessian bound 
\begin{equation}
\mathbf{H}(\tilde{\boldsymbol{\eta}})\succeq\mathbf{B}\quad\text{for all }\tilde{\boldsymbol{\eta}},\qquad\mathbf{B}=-\frac{1}{2}\mathbf{C}\otimes(\mathbf{T}^{\top}\mathbf{T}).\label{hessian_lower_bound}
\end{equation}
Assuming $\mathbf{T}^{\top}\mathbf{T}$ is nonsingular (full column
rank gating design), $\mathbf{B}$ is negative definite.

We now construct a proper MM step around the current gating iterate
$\tilde{\boldsymbol{\eta}}^{(old)}$. By a second-order Taylor expansion,
for each $\tilde{\boldsymbol{\eta}}$ there exists $a\in(0,1)$ such
that 
\begin{align}
Q(\tilde{\boldsymbol{\eta}}\mid\boldsymbol{\eta}^{(old)}) & =Q(\tilde{\boldsymbol{\eta}}^{(old)}\mid\boldsymbol{\eta}^{(old)})+(\tilde{\boldsymbol{\eta}}-\tilde{\boldsymbol{\eta}}^{(old)})^{\top}\boldsymbol{g}(\tilde{\boldsymbol{\eta}}^{(old)})\nonumber \\
 & \quad+\frac{1}{2}(\tilde{\boldsymbol{\eta}}-\tilde{\boldsymbol{\eta}}^{(old)})^{\top}\mathbf{H}\!\left(\tilde{\boldsymbol{\eta}}^{(old)}+a(\tilde{\boldsymbol{\eta}}-\tilde{\boldsymbol{\eta}}^{(old)})\right)(\tilde{\boldsymbol{\eta}}-\tilde{\boldsymbol{\eta}}^{(old)}).\label{taylor_eta_one}
\end{align}
Applying the global bound (\ref{hessian_lower_bound}) inside (\ref{taylor_eta_one})
yields the quadratic minoriser 
\begin{eqnarray}
Q(\tilde{\boldsymbol{\eta}}\mid\boldsymbol{\eta}^{(old)})\ge\mathcal{M}(\tilde{\boldsymbol{\eta}}\mid\tilde{\boldsymbol{\eta}}^{(old)})=Q(\tilde{\boldsymbol{\eta}}^{(old)}\mid\boldsymbol{\eta}^{(old)})+(\tilde{\boldsymbol{\eta}}-\tilde{\boldsymbol{\eta}}^{(old)})^{\top}\boldsymbol{g}(\tilde{\boldsymbol{\eta}}^{(old)})\nonumber\\
+\frac{1}{2}(\tilde{\boldsymbol{\eta}}-\tilde{\boldsymbol{\eta}}^{(old)})^{\top}\mathbf{B}(\tilde{\boldsymbol{\eta}}-\tilde{\boldsymbol{\eta}}^{(old)})\nonumber,\label{minorizer_def_one}
\end{eqnarray}
with equality at $\tilde{\boldsymbol{\eta}}=\tilde{\boldsymbol{\eta}}^{(old)}$.
Since $\mathbf{B}$ is negative definite, $\mathcal{M}(\cdot\mid\tilde{\boldsymbol{\eta}}^{(old)})$
is strictly concave and has the unique maximiser 
\begin{equation}
\tilde{\boldsymbol{\eta}}^{(new)}=\arg\max_{\tilde{\boldsymbol{\eta}}}\,\mathcal{M}(\tilde{\boldsymbol{\eta}}\mid\tilde{\boldsymbol{\eta}}^{(old)})=\tilde{\boldsymbol{\eta}}^{(old)}-\mathbf{B}^{-1}\boldsymbol{g}(\tilde{\boldsymbol{\eta}}^{(old)}).\label{LB_update_one}
\end{equation}
Equivalently, using $\mathbf{B}^{-1}=-2\,\mathbf{C}^{-1}\otimes(\mathbf{T}^{\top}\mathbf{T})^{-1}$,
\begin{equation}
\tilde{\boldsymbol{\eta}}^{(new)}=\tilde{\boldsymbol{\eta}}^{(old)}+2\left\{ \mathbf{C}^{-1}\otimes(\mathbf{T}^{\top}\mathbf{T})^{-1}\right\} \boldsymbol{g}(\tilde{\boldsymbol{\eta}}^{(old)}).\label{LB_update_explicit_one}
\end{equation}
A convenient closed form for $\mathbf{C}^{-1}$ is 
\begin{equation}
\mathbf{C}^{-1}=\left(\mathbf{I}_{K-1}-\frac{1}{K}\mathbf{1}\mathbf{1}^{\top}\right)^{-1}=\mathbf{I}_{K-1}+\mathbf{1}\mathbf{1}^{\top},\label{C_inverse_one}
\end{equation}
which follows by direct multiplication. Writing the expression \eqref{LB_update_explicit_one} in matrix form gives rise to the one-step update equation \eqref{LB_update_matrix_one} for $\boldsymbol{\eta}$.

\subsection{Convergence of the EM-MM algorithm}
Let $\ell(\boldsymbol{\vartheta})$ denote the observed-data log-likelihood, and set $f(\boldsymbol{\vartheta})=-\ell(\boldsymbol{\vartheta})$,
where $\boldsymbol{\vartheta}=(\boldsymbol{\theta},\boldsymbol{\eta})$
with $\boldsymbol{\theta}=\{(\boldsymbol{\beta}_{k},\alpha_{k},\sigma_{k})\}_{k=1}^{K}$
and $\boldsymbol{\vartheta}\in\Theta$, with $\Theta$ as in the standing conventions.
We say that $\boldsymbol{\vartheta}^{\star}$ is a stationary point of the minimisation problem for $f$ on $\Theta$ if the directional derivative $f'(\boldsymbol{\vartheta}^{\star};\mathbf{d})$ is nonnegative for every feasible direction $\mathbf{d}$ at $\boldsymbol{\vartheta}^{\star}$.

At the current iterate $\boldsymbol{\vartheta}^{(m)}$, the E-step
defines the standard EM surrogate
\[
Q(\boldsymbol{\vartheta}\mid\boldsymbol{\vartheta}^{(m)})=Q_{\boldsymbol{\theta}}(\boldsymbol{\theta}\mid\boldsymbol{\vartheta}^{(m)})+Q_{\boldsymbol{\eta}}(\boldsymbol{\eta}\mid\boldsymbol{\vartheta}^{(m)}),
\]
where $Q_{\boldsymbol{\theta}}$ and $Q_{\boldsymbol{\eta}}$ are the expert and gating contributions, respectively. By the usual EM Jensen argument (see \citet[Section~8.5]{RazaviyaynHongLuo2013BSUM}),
there exists an upper-bound function 
\begin{equation}
 u_{m}(\boldsymbol{\vartheta})=-Q(\boldsymbol{\vartheta}\mid\boldsymbol{\vartheta}^{(m)})+\text{const}(\boldsymbol{\vartheta}^{(m)})\label{u_def_conv_revised}
\end{equation}
such that $f(\boldsymbol{\vartheta})\le u_{m}(\boldsymbol{\vartheta})$
for all $\boldsymbol{\vartheta}\in\Theta$ and $f(\boldsymbol{\vartheta}^{(m)})=u_{m}(\boldsymbol{\vartheta}^{(m)})$.
Thus, any update that decreases $u_{m}$ yields a valid MM (more precisely,
GEM) descent step for $f$.

Let $\mathcal{M}_{m}(\boldsymbol{\eta})=\mathcal{M}(\boldsymbol{\eta}\mid\boldsymbol{\eta}^{(m)})$ denote the quadratic minoriser of $Q_{\boldsymbol{\eta}}(\boldsymbol{\eta}\mid\boldsymbol{\vartheta}^{(m)})$ obtained in Appendix \ref{MMstep}. Then, for all $\boldsymbol{\eta}\in\Theta_{\boldsymbol{\eta}}$,
\[
\mathcal{M}_{m}(\boldsymbol{\eta})\le Q_{\boldsymbol{\eta}}(\boldsymbol{\eta}\mid\boldsymbol{\vartheta}^{(m)}),
\]
with equality and first-order agreement at $\boldsymbol{\eta}=\boldsymbol{\eta}^{(m)}$. Since replacing $-Q_{\boldsymbol{\eta}}$ by $-\mathcal{M}_{m}$ gives an upper bound for the gating part of $f$, define
\begin{align}
\tilde{u}_{m}(\boldsymbol{\theta},\boldsymbol{\eta})
&=u_{m}(\boldsymbol{\theta},\boldsymbol{\eta})+\left\{Q_{\boldsymbol{\eta}}(\boldsymbol{\eta}\mid\boldsymbol{\vartheta}^{(m)})-\mathcal{M}_{m}(\boldsymbol{\eta})\right\} \nonumber\\
&=-Q_{\boldsymbol{\theta}}(\boldsymbol{\theta}\mid\boldsymbol{\vartheta}^{(m)})-\mathcal{M}_{m}(\boldsymbol{\eta})+\text{const}(\boldsymbol{\vartheta}^{(m)}).\label{utilde_def_revised}
\end{align}
Because $Q_{\boldsymbol{\eta}}(\boldsymbol{\eta}\mid\boldsymbol{\vartheta}^{(m)})-\mathcal{M}_{m}(\boldsymbol{\eta})\ge0$, we have $\tilde{u}_{m}(\boldsymbol{\vartheta})\ge u_{m}(\boldsymbol{\vartheta})\ge f(\boldsymbol{\vartheta})$ for all $\boldsymbol{\vartheta}\in\Theta$. Moreover, by the tightness of the EM surrogate and the minoriser,
$\tilde{u}_{m}(\boldsymbol{\vartheta}^{(m)})=f(\boldsymbol{\vartheta}^{(m)})$.

The EM-MM update can now be written as the minimisation of the separable upper bound \eqref{utilde_def_revised}. In particular,
\begin{equation}
\boldsymbol{\theta}^{(m+1)}\in\arg\min_{\boldsymbol{\theta}\in\Theta_{\boldsymbol{\theta}}}\{-Q_{\boldsymbol{\theta}}(\boldsymbol{\theta}\mid\boldsymbol{\vartheta}^{(m)})\},\label{theta_update_conv_revised}
\end{equation}
and
\begin{equation}
\boldsymbol{\eta}^{(m+1)}\in\arg\min_{\boldsymbol{\eta}\in\Theta_{\boldsymbol{\eta}}}\{-\mathcal{M}_{m}(\boldsymbol{\eta})\}.\label{eta_update_conv_revised}
\end{equation}
The first update is exactly M-step~1. The second update is equivalent to maximising the quadratic minoriser $\mathcal{M}_{m}$ and therefore gives the one-step update \eqref{LB_update_one}. Consequently,
\[
f(\boldsymbol{\vartheta}^{(m+1)})\le \tilde{u}_{m}(\boldsymbol{\vartheta}^{(m+1)})\le \tilde{u}_{m}(\boldsymbol{\vartheta}^{(m)})=f(\boldsymbol{\vartheta}^{(m)}),
\]
or equivalently, $\ell(\boldsymbol{\vartheta}^{(m)})$ is nondecreasing.

We next record the standard conditions under which the above monotone MM scheme has stationary limit points. The upper bound \eqref{utilde_def_revised} satisfies the usual tightness, global upper-bound, first-order agreement and continuity conditions used in the BSUM framework of \citet{RazaviyaynHongLuo2013BSUM}. Indeed, tightness and the global upper-bound property follow from the construction above; first-order agreement follows from the first-order agreement of the EM surrogate together with the first-order tightness of $\mathcal{M}_{m}$ at $\boldsymbol{\eta}^{(m)}$; and continuity follows from the continuity of the E-step expectations and the quadratic form of $\mathcal{M}_{m}$.

To apply the convergence results of \citet{RazaviyaynHongLuo2013BSUM}, we assume the usual additional conditions: (i) the level set $\{\boldsymbol{\vartheta}\in\Theta_c:f(\boldsymbol{\vartheta})\le f(\boldsymbol{\vartheta}^{(0)})\}$ is compact in the compact parameter set $\Theta_c$ defined in the standing conventions, (ii) $f$ is regular on this level set in the sense of \citet{RazaviyaynHongLuo2013BSUM}, and (iii) the relevant block subproblems have unique minimisers, or ties are resolved by a closed deterministic rule. The regularity assumption is not automatic for nonsmooth nonconvex objectives, so we impose it explicitly. It is nevertheless mild in the present setting because $f$ is directionally differentiable and piecewise smooth, with possible kinks arising only on residual-zero hyperplanes. Under this regularity assumption, coordinate-wise stationarity of the block updates implies stationarity in the directional sense defined above. Uniqueness holds for the gating block whenever $\mathbf{T}^{\top}\mathbf{T}$ is nonsingular, since then the quadratic bound in M-step~2 is strictly convex, and holds for the expert block under the corresponding weighted nonsingularity conditions in M-step~1.

Under these conditions, the BSUM convergence results of \citet{RazaviyaynHongLuo2013BSUM} imply that every limit point of the outer EM-MM iterates $\{\boldsymbol{\vartheta}^{(m)}\}$ is a stationary point of the observed-data maximum likelihood problem. Equivalently, if $S^{\star}$ denotes the set of stationary points of $f$, then
\[
\lim_{m\to\infty}d(\boldsymbol{\vartheta}^{(m)},S^{\star})=0.
\]

\subsection{Some details about the implementation of the PanIC}\label{PanIC}
In the likelihood setting, the PanIC can be applied by taking the per-observation loss to be the negative log-likelihood, and using a penalty that separates models by a slowly diverging factor. A particularly simple choice is the Sin--White information criterion (SWIC) \cite{sin1996}. The SWIC penalty is defined as follows
\[
P_{K,n}=\alpha\text{df}(K)\frac{\log_{+}^{(\beta)}(n)}{\sqrt{n}},\qquad\alpha>0,
\]
where 
\[
\log_{+}^{(\beta)}(n)=\underbrace{\log_{+}\circ\log_{+}\circ\cdots\circ\log_{+}}_{\beta\ \text{times}}(n),\qquad\beta\in\mathbb{N},
\]
with $\log_{+}(x)=\max\{1,\log(x)\}$.\\
The corresponding SWIC (on the $-2\ell$ scale) is 
\begin{eqnarray}
\text{SWIC}_{\beta}(K)&=&-2\ell(\hat{\boldsymbol{\vartheta}}_{K})+2nP_{K,n}\nonumber\\
&=&-2\ell(\hat{\boldsymbol{\vartheta}}_{K})+2\alpha\text{df}(K)\sqrt{n}\log_{+}^{(\beta)}(n)\nonumber\\
&=& \text{PanIC}(K).
\end{eqnarray}
Thus, the SWIC is a member of the class of PanICs. With the calibration in \eqref{pan_calibrate}, the penalty term equals $\text{df}(K)\log(\nu)$ when $n=\nu$, so the criterion is on the same scale as the BIC at the reference sample size.\\
Note that the PanIC theory requires, for each fixed $K$, that the parameter space is compact and that the per-observation loss is Carath\'eodory and Lipschitz in the parameters with an $L^{2}$ envelope. In our setting, these conditions are verified on the compact parameter set $\Theta_c$ defined in the standing conventions, together with mild moment conditions on $(Y,\mathbf{X},\mathbf{T})$. Under these conditions, the SAL expert contribution is piecewise smooth in the regression parameters (the only kinks occur on residual-zero hyperplanes through absolute-value terms), the gating contribution is smooth, and hence the negative log-likelihood is Lipschitz in $\boldsymbol{\vartheta}$ with a square-integrable Lipschitz envelope. Therefore the assumptions of \citet{Nguyen2024PanIC} apply, and the resulting SWIC selector is consistent for the parsimonious value of $K$ under the usual order-selection condition that the minimal expected risk is attained at some finite $K^{\star}$ and does not improve for larger $K$. Establishing analogous consistency statements for the BIC and ICL in the present SALMoE setting would require separate regularity and identifiability arguments for these criteria, and is beyond the scope of this paper.

\bibliography{sn-bibliography}
\end{document}